\crefname{figure}{Fig.}{Figs.}
\crefname{table}{Tab.}{Tabs.}
\setlist[itemize]{noitemsep, topsep=0pt, leftmargin=*}
\DeclareMathOperator*{\argmin}{arg\,min}
\newcommand{\ie}{i.e.\xspace}
\newcommand{\eg}{e.g.\xspace}
\theoremstyle{plain}
\newtheorem{theorem}{Theorem}[section]
\theoremstyle{definition}
\newtheorem{definition}[theorem]{Definition}
\theoremstyle{remark}
\renewcommand{\paragraph}{\smallskip\noindent\textbf}
\long\def\ignore#1{}
\def\Snospace~{\S{}}
\newcommand{\distname}{D$^2$LPM\xspace}
\begin{document}

\date{}

\title{\Large \bf Locality-aware Fair Scheduling in LLM Serving }
\author{Shiyi Cao$^{*\, 1\, 2}$, Yichuan Wang$^{*\, 1\, 2}$, Ziming Mao$^{1}$, Pin-Lun Hsu$^{2}$, Liangsheng Yin$^{1\,2}$, Tian Xia$^{1}$, Dacheng Li$^{1\,2}$ \\ Shu Liu$^{1}$, Yineng Zhang$^{2}$, Yang Zhou$^{1}$, Ying Sheng$^{2}$, Joseph Gonzalez$^{1}$, Ion Stoica$^{1}$ }
\maketitle
{\let\thefootnote\relax\footnote{$^*$indicates equal contribution. \textsuperscript{1}UC Berkeley \textsuperscript{2}LMSYS 
}

\begin{abstract}
Large language model (LLM) inference workload dominates a wide variety of modern AI applications, ranging from multi-turn conversation to document analysis. Balancing fairness and efficiency is critical for managing diverse client workloads with varying prefix patterns. Unfortunately, existing fair scheduling algorithms for LLM serving, such as Virtual Token Counter (VTC), fail to take prefix locality into consideration and thus suffer from poor performance. On the other hand, locality-aware scheduling algorithms in existing LLM serving frameworks tend to maximize the prefix cache hit rate without considering fair sharing among clients. 

This paper introduces the \emph{first} locality-aware fair scheduling algorithm, Deficit Longest Prefix Match (DLPM), which can maintain a high degree of prefix locality with a fairness guarantee. We also introduce a novel algorithm, Double Deficit LPM (\distname), extending DLPM for distributed setup that can find a balance point among fairness, locality, and load-balancing. Our extensive evaluation demonstrates the superior performance of DLPM and \distname in ensuring fairness while maintaining high throughput (up to 2.87$\times$ higher than VTC) and low per-client (up to 7.18$\times$ lower than state-of-the-art distributed LLM serving system) latency.
\end{abstract}
\section{Introduction}
Online inference workloads for large language models (LLMs) are rapidly becoming widespread, driven by their general-purpose capabilities and versatility across a wide range of tasks such as search engines~\cite{perplexity}, coding assistant~\cite{copilot}, autonomous agents~\cite{Park2023GenerativeAgents,wang2023voyager,openai2023gpt4}, and tool calling~\cite{schick2023toolformer,patil2023gorilla}. The release of OpenAI's o1 model has further highlighted the \emph{test-time scaling} phenomenon~\cite{brown2024large,openai_learning_to_reason_2024, snell2024scaling,deepseek_r1_lite_2024}, where the allocation of increased computational resources during inference via techniques such as Monte Carlo Tree Search (MCTS)~\cite{zhang2024rest,putta2024agent}, Best-of-N sampling~\cite{snell2024scaling} and Self-refine~\cite{madaan2024self}, can substantially improve the quality of LLM-generated answers across various tasks.
The increasingly complex test-time compute requirements underscore the growing prominence of inference workloads in the LLM landscape.

Despite the advance in LLM generation quality, efficiently scaling online LLM inference services remains challenging, posing substantial barriers to their broad adoption. On the one hand, service providers need to provide isolation between concurrent tasks to ensure stable and predictable performance for all clients~\cite{openai2023rate}: a client's experience should not be negatively impacted by a dominant or malicious client. On the other hand, service providers want to maximize system efficiency to improve throughput and reduce cost.

\begin{figure}[t]
\centering
\includegraphics[width=0.34\textwidth]{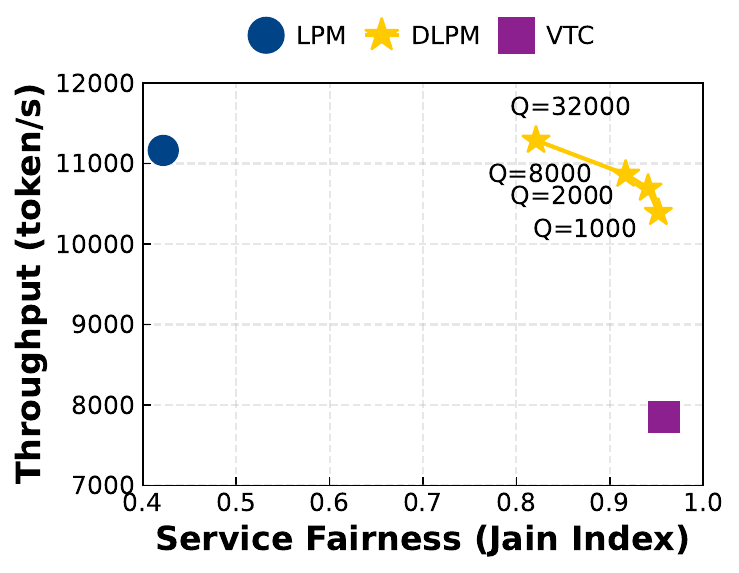}
\caption{DLPM achieves a \emph{new Pareto frontier} considering locality
and fairness in LLM serving. $Q$ is a hyper-parameter in DLPM, indicating how much we relax the fairness bound of DLPM. Results are obtained in a single A10 GPU. }
\label{fig:pareto}
\vspace{-7mm}
\end{figure}
Unfortunately, existing scheduling algorithms~\cite{kwon2023efficient,vtc,zheng2023efficiently,preble} for LLM serving fall short of achieving these dual goals effectively, as shown in \cref{fig:pareto}.
Although fair scheduling algorithms such as Virtual Token Counter (VTC)~\cite{vtc}, are work-conserving -- ensuring the system is fully utilized as long as there are requests in the system -- they are not locality-aware.
Locality awareness is essential for enhancing memory and computational efficiency, particularly through mechanisms such as prefix sharing~\cite{zheng2023efficiently}. Reusing the prefix’s key-value (KV) tensors across multiple requests allows multiple requests sharing the same prefix to retain only one copy of the prefix's KV tensors in GPU memory. Moreover, it reduces redundant computation of the prefix’s KV tensors. Conversely, algorithms such as Longest Prefix Match (LPM)~\cite{zheng2023efficiently} enhance the system efficiency by prioritizing prefix locality: reordering the requests to maximize the prefix cache hit rate, yet they fail to guarantee effective isolation among clients -- a malicious client can monopolize shared resources by sending a large volume of requests with long identical prefix, significantly degrading the performance experienced by other clients.

Achieving both fairness and prefix locality in LLM inference scheduling is challenging, as these two goals inherently conflict with each other.
Prefix sharing, for instance, may require reordering requests to group those with identical prefixes together. In contrast, fair scheduling algorithms prioritize serving requests in a specific order to ensure isolation and prevent any single client from dominating resources. This necessary ordering can interfere with the efficiency gains from prefix sharing, as it restricts the flexibility to reorder requests for optimal resource utilization. This challenge is exacerbated in a distributed setting, where the algorithm must decide not only the order in which the requests are dispatched, but also to which GPU they are dispatched to achieve load balancing and prefix locality. This dual consideration of dispatch order and location significantly complicates achieving efficient and fair resource allocation across multiple GPUs. 

In this paper, we introduce the first locality-aware fair scheduling algorithm \textbf{Deficit Longest Prefix Match (DLPM)} for LLM serving which relaxes the dispatch order required by VTC to better preserve prefix locality while still bounding the allocation fairness.
As illustrated in \cref{fig:pareto}, DLPM can achieve throughput comparable to that of LPM while maintaining a degree of fairness close to that provided by VTC. We further propose a novel \emph{distributed} scheduling algorithm \textbf{Double Deficit LPM
(\distname)}
that builds on top of DLPM to preserve high per-GPU prefix locality with a global fairness guarantee in a distributed setting.

In summary, this paper makes the following contributions: 
\begin{itemize}
    \item We introduce the \emph{first} locality-aware fair scheduling algorithm DLPM and its distributed version \distname  for LLM serving, which can achieve up to $2.87\times$ higher throughput than VTC and up to $7.18\times$ lower latency than the state-of-the-art locality-aware scheduling algorithm~\cite{zheng2023efficiently,preble}.
    \item We provide rigorous theoretical bounds on DLPM and \distname's fairness property, including service bound and latency bound between various types of clients.
    \item We conduct extensive evaluations on our proposed algorithms and demonstrate their superiority in achieving high system throughput while preserving fairness guarantees.
\end{itemize}
\section{Background and Motivation}
In this section, we first briefly introduce the basics of LLM inference, prefix caching, and fairness in LLM serving (\cref{sec:b1}). We then discuss key issues with existing LLM serving scheduling algorithms and the challenges they pose (\cref{sec:b2}).
\subsection{Transformer-Based LLM Inference}
\label{sec:b1}

\paragraph{LLM Inference}
Modern transformer-based LLM inference consists of \textit{prefill} and \textit{decode} phases.
The prefill phase takes input prompt, computes internal embedding vectors for all prompt tokens in parallel using the attention mechanism~\cite{vaswani2017attention}, and generates the first output token. 
These embedding vectors are normally stored inside the GPU memory as the \emph{KV cache} to avoid recomputation. 
In the decode phase, new tokens are generated auto-regressively until an End-Of-Sequence (EOS) token is encountered or the pre-defined maximum token length is reached. During each iteration of token generation, the key-value (KV) cache of all previous tokens will be needed and the key-value tensors of the newly generated token will be appended to the KV cache.
Such auto-regressive generation can lead to sub-optimal device utilization and decreased serving throughput~\cite{pope2023efficiently}. To enhance GPU utilization, \cite{yu2022orca} proposed \emph{continuous batching}. However, limited memory capacity emerged as a critical bottleneck, restricting batch sizes and thus reducing GPU efficiency. To address this issue, \cite{kwon2023efficient} developed PagedAttention, which mitigates memory fragmentation inherent in continuous batching and significantly enhances memory efficiency.

\paragraph{Prefix Caching and Locality}
To further improve the memory and computation efficiency, SGLang~\cite{zheng2023efficiently} introduced RadixAttention to facilitate the reuse of the KV cache of the shared prefix across multiple different LLM calls.
By exploiting the prefix locality, memory usage for the KV cache is reduced, allowing for larger batch sizes and improved GPU utilization. Additionally, it eliminates redundant computations for the shared KV cache,

This technique is increasingly crucial for emerging multi-call LLM workloads such as Tree-of-Thoughts~\cite{yao2024tree}, Skeleton-of-Thought~\cite{ning2024skeletonofthought}, MCTS~\cite{zhang2024rest}, and Self-Refine~\cite{madaan2024self}, where there are substantial opportunities for prefix sharing. For instance, in a Tree-of-Thoughts program, all branches originating from the same node share the entire prefix up to the root. As the tree expands, the number of requests sharing the same prefix grows, and as the tree deepens, the length of the shared prefix increases.

\paragraph{LLM Serving Fairness}
Achieving efficient online LLM inference with Service Level Objective (SLO) guarantees necessitates isolation among different clients~\cite{vtc}. 
This need arises because clients share the same GPU accelerators and compete for these GPU resources. Without isolation, there is a risk that one client might monopolize resources, leading to the starvation of others. 
Moreover, to optimize resource utilization, it is crucial to reallocate unused resources from one client to another rather than merely imposing a rate limit~\cite{openai2023rate} on each client for isolation purposes. 
Rate limits simply disallows clients to send requests beyond a certain rate which harms the resource utilization as shown in~\cite{vtc}.
Formally, our goal is to achieve the classic max-min fairness~\cite{maxminfair}, where the fair scheduling ensures each client receives at least 1/$n$ of the resources, with $n$ representing the total number of clients. If some clients do not fully utilize their allocated share, these resources can be redistributed to others.

The first fair scheduling algorithm targeting the continuous batching mechanism in online LLM serving was the virtual token counter (VTC)~\cite{vtc}. 
VTC maintains a virtual counter of the tokens serviced for each client and prioritizes clients with the lowest counters in each batching iteration. 
By tracking token-level resource usage, VTC achieves fair scheduling even when the output length of the request is unknown.

\subsection{The Trade-offs}
\label{sec:b2}

\begin{figure}[!t]
\centering
\begin{minipage}{0.15\textwidth}
    \centering
    \includegraphics[width=\linewidth]{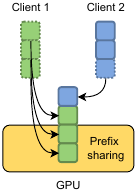}
    \subcaption{LPM.}
    \label{fig:locality_aware}
\end{minipage}
\hfill
\begin{minipage}{0.15\textwidth}
    \centering
    \includegraphics[width=\linewidth]{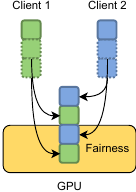}
    \subcaption{VTC.}
    \label{fig:vtc_fair}
\end{minipage}
\hfill
\begin{minipage}{0.15\textwidth}
    \centering
    \includegraphics[width=\linewidth]{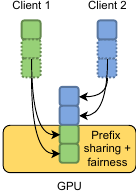}
    \subcaption{DLPM (ours).}
    \label{fig:dlpm_sched}
\end{minipage}
\caption{Requests from the same client share prefixes with each other. 
In LPM locality-aware scheduling, the system schedules the GPUs to process all requests from Client 1 to maximize prefix sharing while starving Client 2. In VTC fair scheduling, the system processes requests in turn to maximize fairness, while ignoring the prefix sharing opportunity.
Our DLPM scheduling achieves the best of two worlds through a novel quantum mechanism (\cref{sec:local_dlpm}) to guarantee locality while not sacrificing fairness.}
\label{fig:background}
\vspace{-1em}
\end{figure}

\paragraph{Locality vs. Fairness}
Achieving both strong fairness and high locality for efficient online LLM serving is inherently challenging, since these two are usually at odds with each other, as illustrated in \cref{fig:background}.
On the one hand, locality-aware scheduling (\cref{fig:locality_aware}) reorders requests to group those with similar prefixes -- often originating from the same client -- to the same GPU to optimize for prefix locality. 
On the other hand, the VTC fair scheduler (\cref{fig:vtc_fair}) adheres to a strict order based on per-client resource usage counters to dispatch requests, ensuring no client continuously dominates the GPU usage; such an order compromises locality as it intersperses the requests of the same client with requests from other clients.
\cref{fig:pareto} also demonstrates the vastly different prioritizations of these two techniques, highlighting the trade-off between fairness and prefix locality.

\paragraph{Locality vs. Load-Balancing}
The challenge intensifies in distributed settings, where model replicas are served on multiple workers, each managed by its own local scheduler, with a global scheduler coordinating all these local workers. In this scenario, the scheduling algorithm on the global scheduler must balance a trade-off between locality and load balancing. For instance, simply distributing requests equally across the cluster is suboptimal due to the high prefix recompute overhead. Similarly, always dispatching requests with the same prefix to a single GPU can lead to workload imbalance.
 
\begin{figure}[t]
\centering
\includegraphics[width=0.45\textwidth]{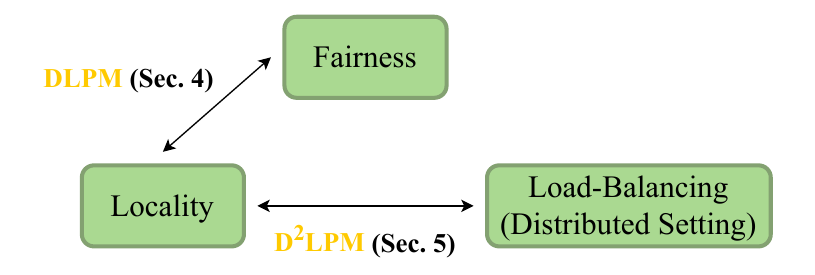}
\caption{This paper addresses the conflict between fairness and locality through the DLPM mechanism (\cref{sec:local_dlpm}). It further addresses the conflict between locality and load balancing in distributed settings with the \distname mechanism (\cref{sec:dist_dlpm}).}  
\label{fig:trinity}
\end{figure}

\paragraph{Design Goals} The main goal of this paper is to provide a principled way of navigating the trade-off between strong fairness and high locality in online LLM serving, as well as between locality and load-balancing in distributed settings. 
Our methodology ensures that the algorithms for single and distributed settings can be combined to maintain global fairness effectively.
In the remainder of the paper, we begin by discussing preliminary concepts related to fairness in LLM serving (\cref{sec:preliminary}), then we introduce our fair scheduling design for a single worker (\cref{sec:local_dlpm}), and finally, we expand this approach to distributed fair scheduling (\cref{sec:dist_dlpm}).
\section{Preliminaries}
\label{sec:preliminary}
In this section, we first formally define the properties a fair scheduling algorithm needs to meet for LLM serving, following those described in VTC~\cite{vtc}. We then discuss the cost function we adopt for service measurement.

\begin{definition}[Backlog]
A client $u$ is backlogged if dispatching additional requests cannot further increase throughput and can only incur additional queueing delay. In distributed settings, a backlogged client may have requests in queues of certain workers or all workers (depending on the policy).
\label{def:backlog}
\end{definition}

\paragraph{Fairness Properties} Similar to VTC, our goal is to achieve \textit{approximate max-min fairness}~\cite{maxminfair} on the service received by each client; \textbf{different from} VTC, we also want to preserve prefix cache locality. More formally, an LLM serving system that can achieve \emph{approximate} max-min fairness should satisfy the following three properties~\cite{vtc}:
\begin{enumerate}
    \item During any time interval $[t_1, t_2)$, if two clients $f$ and $g$ are continuously backlogged, they should receive a similar level of service, i.e. $\vert W_f(t_1, t_2) - W_g(t_1, t_2) \vert \leq \delta$, where $\delta$ is a constant value independent of $t_2 - t_1$.
    \item A client $f$ that is continuously backlogged during a time interval should not receive less service than another client $g$ that is not continuously backlogged during the same time interval, i.e. $ W_g(t_1, t_2) - W_f(t_1, t_2) \leq \delta$, where $\delta$ is a constant value.
    \item The scheduling policy should be work-conserving: no worker should be idle if there are requests in the queue.
\end{enumerate}

The first property states that a client sending at a high request rate is guaranteed to not receive more than their fair share of service and will not impact other normal-behaved clients. The second property prevents clients from accumulating unused service by first sending at a low request rate and later monopolizing the system.
The third property guarantees that no resources are wasted in order to enforce fairness.

\paragraph{Measurement of Service}
Another important aspect in designing a fair scheduling algorithm for LLM serving is how the service should be measured. In VTC, the cost function is defined as a weighted sum of the number of input tokens and the number of output tokens. To incorporate the impact of prefix sharing (\ie, reduced memory and computations), we introduce a slightly different measure.
Intuitively, with prefix sharing, the prefix tokens' cost should only be counted once when it is first calculated and stored in the GPU memory. Our prefix-aware version of the cost function is then defined as $W(t_1, t_2) = w_e \cdot n_e(t_1, t_2) + w_q \cdot n_q(t_1, t_2)$. The notations are explained in \cref{tab:notations}. Here $w_e$ and $w_q$ are set to be $1$ and $2$, inspired by OpenAI's pricing for GPT4\footnote{\url{https://openai.com/api/pricing/}}.

\begin{table}[!t]
\centering
\caption{The upper half includes notations for service measurement. The lower half includes notations for the DLPM and \distname algorithm and their analysis. *The extend tokens are the input tokens excluding prefix tokens.}
\resizebox{1\columnwidth}{!}{
\begin{tabular}{c|p{73mm}}
\toprule
Notation & Explanation\\
\midrule
$W_f(t_1, t_2)$ & service received by $f$ during interval $[t_1, t_2)$\\
$n_e$ & number of processed extend tokens* \\
$n_q$ & number of processed output tokens \\
$w_e$ & weight of extend tokens in the cost function \\
$w_q$ & weight of output tokens in the cost function \\
\midrule
$Q^u$ & the quantum assigned to each client in DLPM\\
$q_i$ & the deficit counter of client $i$ in DLPM \\
$Q^w$ & the quantum assigned to each worker in \distname\\
$q_{i,w}$ & the deficit counter of worker $w$ for client $i$ in \distname \\
$L_{input}$ & maximum number of input tokens in a request \\
$L_{output}$ & maximum number of output tokens in a request \\
$M$ & maximum number of tokens that can be fitted in a running batch  \\
$U$ & maximum number of counter that a single request can consume $w_e\cdot L_{input} + w_q\cdot M$ \\
$D$   & data parallelism degrees \\
\bottomrule
\end{tabular}
}
\label{tab:notations}
\end{table}
\section{Deficit Longest Prefix Match (DLPM)}
\label{sec:local_dlpm}
In this section, we present our algorithm DLPM for the single worker in \cref{sec:dlpm_alg} and the proved fairness guarantees in \cref{sec:dlpm_property}.

\subsection{Algorithm Design}\label{sec:dlpm_alg}
In the Longest Prefix Match (LPM) algorithm~\cite{zheng2023efficiently}, at each continuous batching step, the scheduler first sorts current requests in the waiting queue based on their matched prefix length and then adds them to the new batch until the memory pool is full. LPM efficiently utilizes memory by grouping requests that can share a common prefix, thus maximizing the decoding batch size, which in turn leads to better operational intensity and throughput for the decoding phase.
To maintain the cache hit rate while introducing a fairness guarantee, it is essential not to disrupt the LPM order of the requests excessively. To achieve this, we incorporate a quantum mechanism inspired by the deficit round robin (DRR) approach~\cite{drr}. This mechanism compels the scheduler to occasionally prioritize requests from less-served clients over those with the longest matching prefixes. Intuitively, this mechanism is effective because it preserves the local ordering inherent to the LPM. As a result, the system continues to benefit significantly from the memory savings brought by the shared prefixes, while the additional cost of prefix recomputation is incurred only when switching to serve less-served clients.
This balanced approach allows DLPM to uphold the core efficiencies of the original LPM algorithm while enhancing fairness across client requests, ensuring that no clients monopolize the batching process to the detriment of others. 

The core algorithm of DLPM is presented in \cref{alg:local-dlpm}. Initially, the algorithm initializes all clients' deficit counter \(q_i\) to zero, with \(Q^u\) representing the service quantum replenished to each client in a cycle. At each continuous batching step, DLPM performs the following steps: 1) It sorts the requests in the waiting queue by their matched prefix length and then tries to add them to the currently running batch (\(B\)) until the memory pool is full. 2) The request will be added to \(B\) if the corresponding client has a positive deficit counter (\(q_i>0\)). Otherwise, the request will be skipped. When all the active clients have $q\leq0$, they will be replenished by $Q^u$ at Line~\ref{line:replenish_c}. 3) After each request is added to \(B\), the corresponding client's deficit counter will deduct the amount of service invoked by the extend tokens. 4) The new batch \(B\) then goes through one model forward step. After each decoding step, the service invoked by the output tokens will be deducted from the client's deficit counter accordingly.

\begin{algorithm}[!t]
\caption{Deficit Longest Prefix Match (DLPM)}
\begin{algorithmic}[1]
\State let $l$ denotes the client list
\State let $B$ denotes current running batch
\Function{CheckRefill}{$l$, $Queue$}
\ForAll{$i \in \{ client(r) \mid r \in Queue \}$}
    \If{$q_i > 0$} \label{line:check_q_i}
        \Return
    \EndIf
\EndFor
\ForAll{$i \in l$}
    \If{$q_i \leq 0$}
        $q_i\leftarrow q_i + Q^u$
        \label{line:replenish_c}
    \EndIf
\EndFor
\EndFunction
\State $\triangleright$ \texttt{with monitoring stream:}
\While{True}
    \If {new request $r$ from client $i$ arrived}
        \If {$i \notin l$}
            $q_i\leftarrow 0$, \label{line:client_join}
            $l \leftarrow l + u$
        \EndIf
        \State $Queue \leftarrow Queue + r$
    \EndIf
\EndWhile
\State $\triangleright$ \texttt{with execution stream 1:}
\While{True}
    \State $Queue \leftarrow $ \Call{SortByPrefix}{$Queue$} 
    \While{not $Queue.empty()$}
        \For{each $r \in Queue$}
            \State $i \leftarrow client(r)$
            \If{$q_i \leq 0$}
                \Call{CheckRefill}{$l$, $Queue$}
            \EndIf
            
            \If{$q_i > 0$ and \Call{CanAdd}{$r$}}
                \State $B \leftarrow B + r$
                \State $q_i \leftarrow q_i - w_e\cdot extend\_length(r)$
                \label{line:subtract_c}
                \State $Queue \leftarrow Queue - r$
            \EndIf
        \EndFor
    \EndWhile
    \State \Call{ForwardStep}{$B$}
    \State $q_i\leftarrow q_i - w_q \cdot \vert \{r| client(r)=i, r \in B\}\vert$
    \label{line:subtract_c_finish}
    \State $B \leftarrow$ filter\_finished\_requests($B$)
\EndWhile
\end{algorithmic}
\label{alg:local-dlpm}
\end{algorithm}

\subsection{Fairness Guarantees of DLPM}\label{sec:dlpm_property}
In this section, we provide the theoretical fairness guarantees of DLPM that correspond to the three properties we introduced in \cref{sec:preliminary}. The full proofs are provided in \cref{sec:appendix_dlpm}.

\begin{restatable}[\textbf{Service bound between backlogged clients}]{theorem}{boundedServiceDifference} Under the DLPM scheme: for any time interval $[t_1,t_2)$, if two clients $f$ and $g$ are continuously backlogged. Then the difference in their received service are bounded: $\vert W_f(t_1,t_2) - W_g(t_1,t_2)\vert \leq 2 \cdot (U + Q^u)$, where $U=w_e\cdot L_{input} + w_q\cdot M$.
\label{theorem:bounded-service-difference}
\end{restatable}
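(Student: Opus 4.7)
The plan follows the template of the classical deficit round robin (DRR) analysis, adapted to DLPM's continuous-batching setting. The first ingredient is to show that two continuously backlogged clients receive identical numbers of quantum refills during $[t_1, t_2)$. Inspecting \textsc{CheckRefill}, a refill only executes once every queued client has $q_i \le 0$; since $f$ and $g$ are always queued by hypothesis, each refill event simultaneously finds $q_f, q_g \le 0$ and adds $Q^u$ to both (the refill loop at Line~\ref{line:replenish_c} touches every client with non-positive deficit). I can therefore set $n_f = n_g =: n$ and eliminate the refill count from the rest of the argument.

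With equal refill counts in hand, I would write down the service--deficit bookkeeping identity. The only operations mutating $q_c$ are the refill at Line~\ref{line:replenish_c} (contributing $+Q^u$), the extend subtraction at Line~\ref{line:subtract_c}, and the per-step output subtraction at Line~\ref{line:subtract_c_finish}; the last two match exactly the extend and output contributions to the cost function $W_c$. Summing over $[t_1, t_2)$ yields
\[
q_c(t_2) - q_c(t_1) = n \cdot Q^u - W_c(t_1, t_2) \quad \text{for } c \in \{f, g\},
\]
and subtracting the two instances cancels the $n \cdot Q^u$ term, reducing the theorem to the pointwise inequality
\[
|W_f - W_g| \le |q_f(t_1) - q_f(t_2)| + |q_g(t_1) - q_g(t_2)|.
\]

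What remains is to prove $q_c(t) \in [-U, Q^u]$ for any backlogged client at any time. The upper bound is immediate: $q_c$ only increases at a refill, which fires only when $q_c \le 0$ and adds $Q^u$, so $q_c \le Q^u$ forever. For the lower bound I would examine the last event before $t$ at which $q_c$ crossed from strictly positive to non-positive. That event is either an admission (which, starting from $q_c > 0$, subtracts at most $w_e L_{\text{input}}$) or a single forward step (which subtracts at most $w_q M$, since the whole batch contains at most $M$ tokens); both drops are bounded by $U$ by definition. After the crossing the algorithm refuses further admissions from $c$ until the next refill, so all additional drops arise from continued decoding of requests that were admitted strictly before the crossing.

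The main obstacle is controlling these post-crossing drops, because $c$ may carry several in-flight requests into the crossing event and naive bounds on $w_q \cdot |A_c|$ per step do not sum to $U$. The cleanest way to discharge the argument is an amortisation: express the total deficit decrease since the most recent refill as a sum of per-request contributions, charge each contribution to the single request whose processing caused it, and use the fact that every admitted request is admitted only when $q_c > 0$ and carries total lifetime cost at most $U$; combined with the observation that extend subtractions within a single cycle sum to at most $Q^u + w_e L_{\text{input}}$ before the first non-positive crossing, this pins $q_c(t) \ge -U$. Substituting the pointwise bound $q_c \in [-U, Q^u]$ into the reduced inequality yields each difference $|q_c(t_1) - q_c(t_2)| \le Q^u + U$, and hence $|W_f - W_g| \le 2(Q^u + U)$, as claimed.
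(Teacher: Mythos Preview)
Your high-level structure is exactly the paper's: (i) continuously backlogged clients receive the same number of refills because \textsc{CheckRefill} only fires when every queued client already has non-positive deficit; (ii) the bookkeeping identity $q_c(t_2)-q_c(t_1)=nQ^u-W_c(t_1,t_2)$ reduces the claim to $|q_c(t_1)-q_c(t_2)|\le U+Q^u$; (iii) this in turn follows from the pointwise range $q_c(t)\in(-U,Q^u]$, with the upper bound immediate from the refill guard. This is precisely the decomposition the paper uses, invoking \cref{theorem:dlpmServiceBound} for the pointwise range.

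The only substantive divergence is your argument for the lower bound $q_c(t)\ge -U$, and here the amortisation sketch has a gap. Charging deficit decreases to individual requests and invoking a per-request lifetime cost of at most $U$ does not control the \emph{sum} over the several requests $c$ may carry into the crossing: if $k>1$ requests are in flight, their post-crossing decode costs add, and neither the per-request bound nor the observation that extend subtractions in one cycle total at most $Q^u+w_eL_{\text{input}}$ prevents the aggregate decode deduction from driving $q_c$ below $-U$. The paper (proof of \cref{theorem:dlpmServiceBound}) closes this differently and more directly: it bounds the cumulative post-crossing output tokens for $c$ by the batch capacity $M$ itself, arguing that all of $c$'s in-flight output tokens simultaneously occupy slots in the running batch and hence $n\cdot|B_c|\le M$; the decode deduction is therefore at most $w_qM$, and combined with $q_c>-w_eL_{\text{input}}$ at the crossing (which you already derived) this yields $q_c(t)>-(w_eL_{\text{input}}+w_qM)=-U$. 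Replacing your amortisation paragraph with this batch-capacity step would align the proof with the paper's and close the gap.
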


\begin{proof}
Let the client with maximum service be $f$, and the client with minimum service be $g$. Consider $t_1$ and $t_2$.
\begin{itemize}
    \item At $t_2$, since both clients $f$ and $g$ are backlogged and are in client list $l$, both client $f$ and client $g$ have been replenished the \textit{same} $k$ number of times in Line~\ref{line:replenish_c} since $t_1$.  $f$ and $g$ are backlogged, Line~\ref{line:check_q_i} ensures that both clients have negative $q_i$ before reaching Line~\ref{line:replenish_c} and be replenished.
    \item Since $t_1$, client $f$ at $t_2$ has received service $W_f(t_1, t_2) =  q_f(t_1) + k \cdot Q^u - q_f(t_2)$. client $g$ at $t_2$ has received service $W_g(t_1, t_2) =  q_g(t_1) + k \cdot Q^u - q_g(t_2)$.
    \item  $\vert W_f(t_1, t_2) - W_g(t_1, t_2) \vert = \vert q_f(t_1) - q_f(t_2) - q_g(t_1) + q_g(t_2)  \vert \leq \vert q_f(t_1) - q_f(t_2) \vert + \vert q_g(t_2) - q_g(t_1) \vert \leq 2 \cdot (U + Q^u)$, according to \cref{theorem:dlpmServiceBound}.
\end{itemize}

\end{proof}

\begin{restatable}[\textbf{Service bound between backlogged and non-backlogged clients}]{theorem}{nonBackloggedClients}
\label{theorem:backlog}
Under the DLPM scheme: Client $f$ that is continuously backlogged during time interval $[t_1,t_2)$ should not receive less service than another client, $g$, that is not continuously backlogged during the same time interval, that is $W_f(t_1,t_2) \geq W_g(t_1,t_2) - 2 U - 2 Q^u$.
\end{restatable}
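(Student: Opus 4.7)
The plan is to adapt the counter-telescoping argument used in the proof of Theorem \ref{theorem:bounded-service-difference}, now exploiting the asymmetry that $f$ is continuously backlogged while $g$ is not, in order to bound the one-sided gap $W_g - W_f$ from above.

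The starting point is to express each client's received service as refill amount minus net counter increase:
\[
W_f(t_1,t_2) = q_f(t_1) - q_f(t_2) + k_f \cdot Q^u, \qquad W_g(t_1,t_2) = q_g(t_1) - q_g(t_2) + k_g \cdot Q^u,
\]
where $k_f$ and $k_g$ denote the numbers of times $f$ and $g$ are replenished at Line \ref{line:replenish_c} during $[t_1,t_2)$. This identity is immediate from the algorithm, since the counter only changes through a refill of $Q^u$ or through service subtractions whose total is, by the definition of the cost function, exactly $W$.

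The key step is to show $k_g \leq k_f$. The gate at Line \ref{line:check_q_i} permits CheckRefill to proceed only when every client currently in $Queue$ has $q_i \leq 0$. Since $f$ is continuously backlogged, $f$ is always in $Queue$, which forces $q_f \leq 0$ at every refill event, and so $f \in l$ is refilled at every such event. The replenishment loop updates $g$ only when $q_g \leq 0$; during $g$'s non-backlogged spells, however, $q_g$ is frozen (no requests of $g$ are being processed), so $q_g$ can remain strictly positive and $g$ can miss those refill events. This yields $(k_g - k_f) Q^u \leq 0$. Finally, I would invoke the counter-range bound from \cref{theorem:dlpmServiceBound}, namely $|q_i(t_a) - q_i(t_b)| \leq U + Q^u$ for any client $i$ and times $t_a, t_b$, which holds for backlogged and non-backlogged clients alike. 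Substituting into
\[
W_g(t_1,t_2) - W_f(t_1,t_2) = [q_g(t_1) - q_g(t_2)] + [q_f(t_2) - q_f(t_1)] + (k_g - k_f) Q^u
\]
gives $W_g - W_f \leq (U + Q^u) + (U + Q^u) + 0 = 2U + 2Q^u$, which rearranges to the desired inequality.

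The main obstacle will be the careful algorithm-level justification of $k_g \leq k_f$: verifying that CheckRefill fires only when all queued clients have $q_i \leq 0$, that $f$'s continuous backlog keeps $f$ in $Queue$ at every such firing, and that $g$'s non-backlogged intervals genuinely preserve the positivity of $q_g$ so that $g$ is passed over in the replenishment loop. A minor subtlety is handling $g$'s arrival mid-interval, where the initialization $q_g \leftarrow 0$ at Line \ref{line:client_join} keeps $q_g$ inside the range used by \cref{theorem:dlpmServiceBound}.
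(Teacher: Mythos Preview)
Your proposal is correct and follows essentially the same approach as the paper: expressing each client's service via the counter-telescoping identity, arguing that the continuously backlogged client $f$ receives at least as many refills as $g$ (so $(k_g-k_f)Q^u\le 0$), and then bounding the residual counter differences by $U+Q^u$ via \cref{theorem:dlpmServiceBound}. Your justification of $k_g\le k_f$ is in fact slightly sharper than the paper's phrasing---the paper attributes $g$'s missed refills to ``not being in the active client list,'' whereas the algorithm actually skips $g$ at Line~\ref{line:replenish_c} when $q_g>0$, which is exactly what you identify.
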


\begin{proof}

\begin{itemize}
    \item Consider client $f$ and client $g$. $f$ is \textit{continuously} backlogged and $g$ is not \textit{continuously} backlogged.
    \item If $g$ is not backlogged during the entire duration from $t_1$ to $t_2$, $W_g(t_1,t_2) \leq U$, with no new request arrival.
    \item Let client $f$ be replenished $k^{t}_f$ at time $t$ in Line~\ref{line:replenish_c}.
    \item Since $f$ is continuously backlogged from $t_1$ to $t_2$, $k^{t_2}_f - k^{t_1}_f \geq k^{t_2}_g - k^{t_1}_g$. A backlogged client will be replenished for the same time as another backlogged client, from Theorem~\ref{theorem:bounded-service-difference}. A non-backlogged client will be replenished less as it is not in the active client list (Line~\ref{line:check_q_i}). 
    \item $W_g(t_1, t_2) - W_f(t_1,t_2) = (q_g(t_1)  + k_g^{t_2} Q^u - q_g(t_2) - k_g^{t_1} Q^u) - ( q_f(t_1) + k_f^{t_2} Q^u -  q_f(t_2) + k_f^{t_1} Q^u) \leq 2 (U + Q^u) - Q^u \cdot (k_f^{t_2}  - k_f^{t_1} - k_g^{t_2} + k_g^{t_1}) \leq 2 (U + Q^u)$, since $Q^u \cdot (k_f^{t_2}  - k_f^{t_1} - k_g^{t_2} + k_g^{t_1}) > 0$.
\end{itemize}

\end{proof}

The DLPM algorithm is work-conserving since it only manipulates the dispatch order and does not reject a request if it fits into the running batch.

\cref{theorem:bounded-service-difference} and \cref{theorem:backlog} reflect the first and second properties introduced in \cref{sec:preliminary}. Illustrative examples for \cref{theorem:bounded-service-difference} can be found in \cref{fig:multi-turn-exp} and \cref{fig:mix} in \cref{sec:tot_vis}, where within any time interval, the difference of the received service of two continuously backlogged clients is bounded.

\section{Applying DLPM to Distributed Scheduling}\label{sec:dist_dlpm}
In this section, we first present the strawman solution of centralized DLPM for distributed scheduling that ignores the scheduling overhead (\cref{ssec:centra_dlpm}).
We then proposed a decentralized DLPM solution that hides this overhead while preserving fairness property (\cref{ssec:decen_dlpm}).

\subsection{Strawman: Centralized DLPM}
\label{ssec:centra_dlpm}
The DLPM algorithm works perfectly when there is no scheduling overhead such that the DLPM scheduler could immediately make decisions based on freshest GPU states. 
Unfortunately, in real-world distributed scenarios, scheduling overhead happens significantly because of concurrent request handling and synchronization, prefix tree traversing and maintenance, and more. 
Recent work has also shown that the CPU scheduling overhead occupies nearly half of the inference time for two popular LLM inference engines~\cite{scheduler_overhead}.

\paragraph{Global-local States Synchronization}
To enable global DLPM for fair scheduling in distributed setups, we need to synchronize local and global prefix caching information. This synchronization ensures that the global scheduler can replicate the decision-making process typical of a single worker.
Using the token RadixTree from SGLang~\cite{zheng2023efficiently} as an example, to construct an accurate global RadixTree at time \(t_i\) (assume the last time the global scheduler dispatches the requests at time \(t_{i-1}\)), updates from each worker $s$ are encapsulated as \( \Delta Tree_s \), defined as:
\[
\Delta Tree_s(t_{i-1}, t_i) = (N_{\text{inserted}}, N_{\text{evicted}}, M_{\text{KV}})_s
\]
where \( N_{\text{inserted}} \) and \( N_{\text{evicted}} \) are sets of nodes that have been inserted to or evicted from the RadixTree, between the last dispatch time \( t_{i-1} \) and the current time \( t_i \). \( M_{\text{KV}} \) indicates the current available KV cache memory.

Upon sending these updates, the worker enters a blocked state, awaiting new requests from the global scheduler. The global scheduler then updates the RadixTree accordingly and dispatches new requests to the local worker following the DLPM algorithm. Such a synchronous approach guarantees the effectiveness and correctness of DLPM in the distributed setup; however, it incurs significant overhead due to the need to block workers while awaiting new requests, and the race conditions on the global waiting queue across workers.

\begin{figure}[!t]
\subfloat[Scheduler Overhead Breakdown. The global queue size is 200. Decode batch size is 25.]{\includegraphics[width=0.275\textwidth]{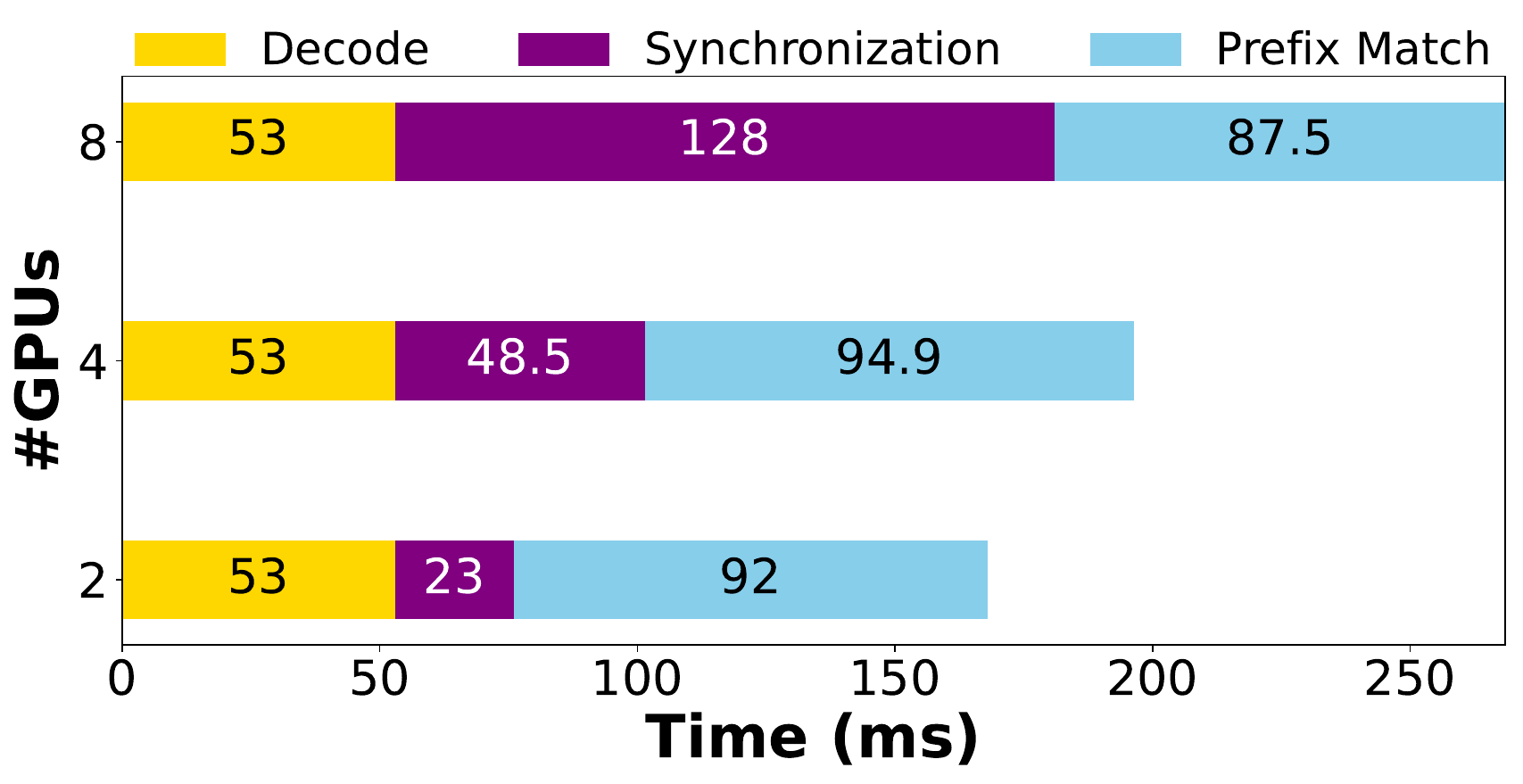}\label{fig:breakdown}}
\hspace{0.5em}
\subfloat[Prefix match overhead w.r.t global queue size.]{\includegraphics[width=0.18\textwidth]{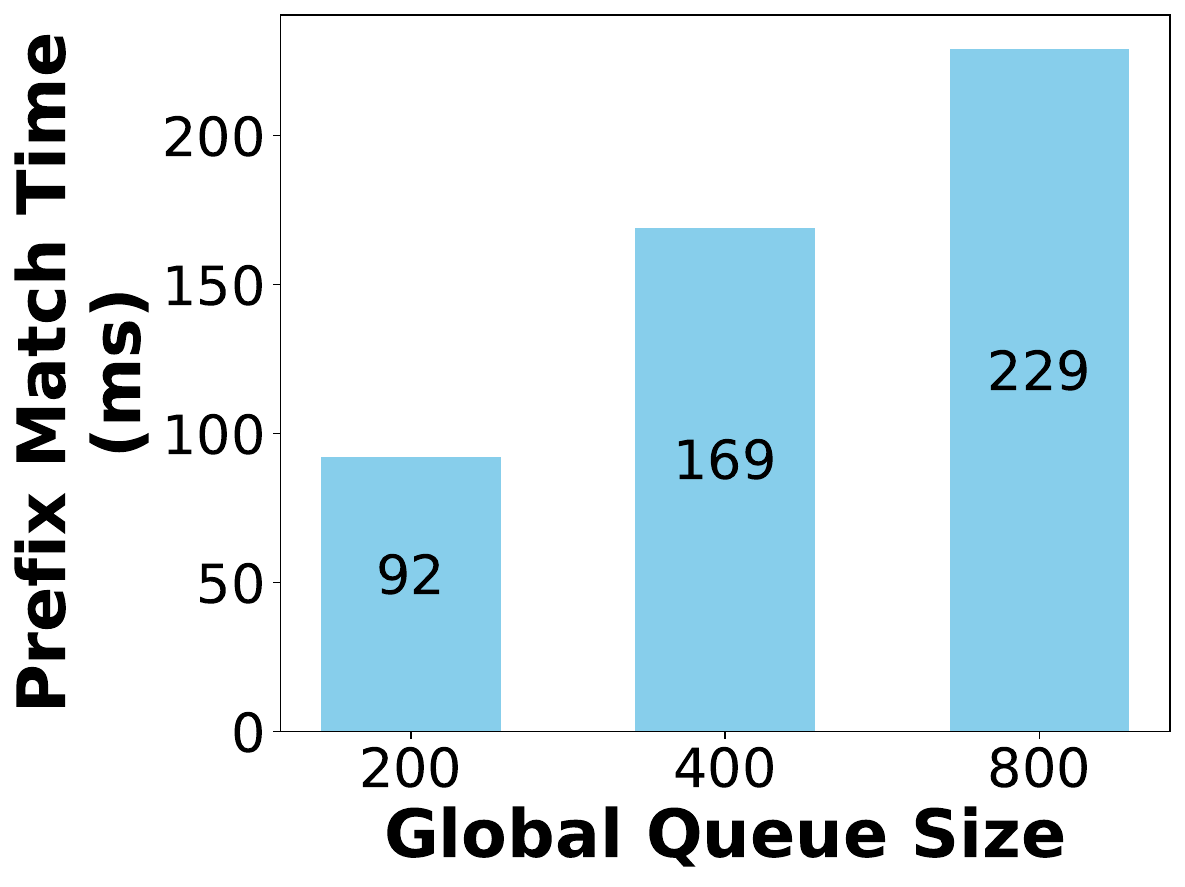}\label{fig:prefix_overhead}}
\caption{Global scheduler overhead breakdown w.r.t data parallelism degree and global queue size. The time for one decode step with bs=25 is also reported for reference. Existing serving engines such as vLLM~\cite{kwon2023efficient} and SGLang~\cite{zheng2023efficiently} normally perform a continuous batching step after multiple (e.g., 10 in SGLang) decoding steps.}
\vspace{-1em}
\label{fig:overhead}
\end{figure}

\paragraph{Overhead Analysis}
The global scheduler's overhead primarily stems from synchronization overhead, algorithmic overhead (e.g., the frequent tree-matching overhead for the global waiting queue), and metadata updates overhead. Among these, the metadata updates overhead per worker remains relatively constant as the system scales. However, the synchronization and algorithmic overhead increase dramatically as the data parallelism degree ($D$)\footnote{Here data parallelism degree refers to the number of model replicas in the distributed settings.} and global queue size increases, as shown in \cref{fig:overhead}. 
The prefix matching process (algorithmic overhead) involves matching all incoming requests in the global waiting queue against each worker's radix tree and sorting them based on prefix length to determine the dispatch order. The "Prefix Match" time (blue) increases significantly as the global queue size increases (\cref{fig:prefix_overhead}), which is normally the case when the data parallelism degree grows.

Overall, \Cref{fig:overhead} demonstrates how synchronization and algorithmic overheads dominate as the data parallelism degree increases, particularly for higher degrees ($D$ = 8) -- they add to around 40\% decoding overhead in the demonstrated case. This analysis underscores the challenges of designing scalable global schedulers to mitigate synchronization and algorithmic bottlenecks as the system scales.

Besides the significant scheduling overhead, the Global DLPM scheduler also requires extensive modification of the local worker to enable local-global information synchronization and the blocking operation to wait for the global scheduler dispatching requests.

\subsection{Our Solution: Decentralized DLPM}
\label{ssec:decen_dlpm}

\begin{figure}[!t]
\centering
\includegraphics[width=0.38\textwidth]{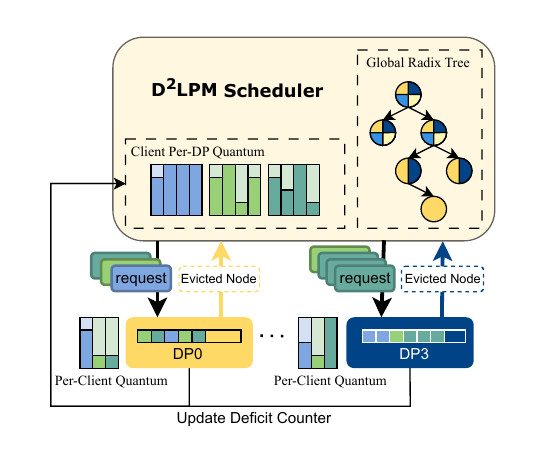}
\caption{An overview of the \distname scheduler. The global scheduler tracks the deficit counters for each client per worker to control the ``stickiness'' of a client to a worker. The local schedulers maintain the deficit counters for each client to enforce the fair sharing of the local GPU resources.}
\label{fig:doubleq}
\vspace{-1em}
\end{figure}

To mitigate \emph{the global scheduling overhead and tight coupling} between the global scheduler and the local worker, we resort to decentralized scheduling: dispatching the requests directly to local workers and queueing them at the local worker instead of the global scheduler. 
Most of the existing distributed schedulers for LLM serving (e.g., Preble~\cite{preble} and SGLang~\cite{zheng2023judging}) follow this design.

In such a decentralized design, the local worker can directly run a fair scheduling algorithm (e.g., DLPM); as long as the global scheduler can balance the per-client service on all the local workers, we could achieve global fairness guarantees~\cite{bejerano2004fairness}. Previous works in CPU scheduling~\cite{cfs} and wireless LANs bandwidth sharing~\cite{bejerano2004fairness} also demonstrate the effectiveness of such design. 
Therefore, the challenge now becomes how to strike a good trade-off between load balancing and locality. 

\begin{algorithm}[!t]
\caption{\distname Scheduling}
\begin{algorithmic}[1]
\State let $s_{w}$ denotes the current queue size of worker $w$.
\State $W \leftarrow \Call{GetWorkers}$,
$R \leftarrow \Call{InitRadixTree}{\vert W \vert}$

\Function{SelectWorker}{$G$, $i$}\label{line:select_worker}
\State $G_{avail} \leftarrow \{ w \mid q_{i,w} > 0 \}$
\While{$G_{avail} == \emptyset$}
\ForAll{$w \in W$}
$q_{i,w} \leftarrow q_{i,w} + Q^w$
\EndFor
\EndWhile
\State $G_{cand}  \leftarrow G \cap G_{avail}$
\If{$G_{cand}==\emptyset$}
\Return $\argmin_{w \in G_{avail}} s_{w}$ \label{line:routeEven}
\EndIf
\Return $\argmin_{w \in G_{cand}} s_{w}$
\EndFunction

\State $\triangleright$ \texttt{with concurrent stream 1:}
\While{True}
    \If {new request $r$ from client $i$ arrived}
        \State $G \leftarrow \Call{R.LongestMatchWorkers}{r}$
        \label{line:longest_match}
        \State $w \leftarrow$ \Call{SelectWorker}{$G$, $client(r)$}
        \State \Call{Dispatch}{$w$, $r$}
        \State $q_{i,w} \leftarrow q_{i,w} - w_e \cdot r.\text{input\_tokens}$
        \State $s_w \leftarrow s_w + 1$
    \label{line:subtract_Q_input}
        \State \Call{R.Insert}{$r$.input\_tokens, $w$}
    \EndIf
\EndWhile
\State $\triangleright$ \texttt{with concurrent stream 2:}
\While{True}
    \If{request $r$ from client $i$ has finished at worker $w$}
        \State $q_{i,w} \leftarrow q_{i,w}$ - $w_q \cdot r$.output\_tokens
        \State $s_w \leftarrow s_w - 1$
        \label{line:subtract_Q}
    \EndIf
\EndWhile
\State $\triangleright$ \texttt{with concurrent stream 3:}
\While{True}
\If{prefix $P$ has been evicted at worker $w$}
\State \Call{R.Evict}{$P$, $w$}
\label{line:evict}
\EndIf
\EndWhile
\end{algorithmic}
\label{alg:radix-tree}
\end{algorithm}

\paragraph{Double Deficit LPM (\distname)}
Our key insight is to prioritize locality first until certain limits are met: we use the quantum mechanism again to avoid a client becoming too sticky to a single worker due to the prefix cache locality by assigning quantum to each worker for each client. 
As demonstrated in \cref{fig:doubleq} and \cref{alg:radix-tree}, for each new request, the global scheduler first matches it with the global radix tree and get the workers $G$ that have its longest-matched prefix (Line~\ref{line:longest_match}). Then in the \textsc{SelectWorker} function (Line~\ref{line:select_worker}), if any wokers in $G$ has deficit counter larger than $0$ ($G_{avail}$), the worker with minimum queue size in $G\cap G_{avail}$ will be chosen. Otherwise, the worker with minimum queue size in $G_{avail}$ will be selected. After each request is dispatched, the request's input tokens will be inserted into the global radix tree and the corresponding deficit counter will be updated (Line~\ref{line:subtract_Q_input}). The global scheduler will periodically update corresponding deficit counter when there are requests finished (Line~\ref{line:subtract_Q}) as well as prune the global radix tree with collected local workers' eviction information (Line~\ref{line:evict}). Note that unlike the centralized DLPM where the eviction information needed to be passed to the global scheduler synchronously, in \distname this happens asynchronously with negligible overhead. 

We note that our \distname scheduling (with local workers running DLPM)\footnote{Generally, in \distname, the local worker can run any other fair scheduling algorithms such as VTC. In this paper, \distname specifically refers to the implementation using DLPM at the local workers.} provides global fairness guarantees corresponding to the properties introduced in \cref{sec:preliminary} through the following theorems. 
\begin{restatable}[\textbf{Service bound between backlogged clients}]{theorem}{doubleQBoundedServiceDifference} At any time interval $[t_1,t_2)$, $\max_{i} W_i(t_1, t_2) - \min_{i} W_i(t_1, t_2) \leq 2 \cdot \vert W  \vert \cdot (U + Q^u)$. The difference between the maximum service among all backlogged clients and the minimum service among all backlogged clients is bounded by $2 \cdot \vert W  \vert \cdot (U + Q^u)$, where $\vert W \vert$ is the number of workers. 
\end{restatable}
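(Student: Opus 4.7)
The plan is to lift the single-worker fairness bound of \cref{theorem:bounded-service-difference} to the distributed setting by decomposing each client's aggregate service into per-worker contributions and summing the resulting bounds. Concretely, for any two continuously backlogged clients $f$ and $g$, I would write $W_f(t_1,t_2) = \sum_{w \in W} W_{f,w}(t_1,t_2)$ and similarly for $g$, where $W_{i,w}(t_1,t_2)$ denotes the service client $i$ receives at worker $w$ during the interval. Since every worker runs DLPM locally with per-client deficit counters of quantum $Q^u$, the goal is to apply \cref{theorem:bounded-service-difference} at each worker to obtain $\vert W_{f,w}(t_1,t_2) - W_{g,w}(t_1,t_2)\vert \leq 2(U+Q^u)$, then apply the triangle inequality across workers to conclude $\vert W_f(t_1,t_2) - W_g(t_1,t_2)\vert \leq 2\vert W\vert (U+Q^u)$. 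Taking the maximum over all pairs of backlogged clients immediately yields the stated bound on $\max_i W_i - \min_i W_i$.

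The subtle part is that \cref{theorem:bounded-service-difference} assumes both clients are continuously backlogged \emph{at the worker in question}, whereas a globally backlogged client may temporarily have no queued requests at some worker $w$ because \distname has routed all its pending work elsewhere. To cover these intervals I would appeal to the global deficit mechanism in \cref{alg:radix-tree}: the counters $q_{i,w}$ with quantum $Q^w$ act as a DLPM-style dispatcher across workers, so the per-worker service is tied to the number of global replenishments of $q_{i,w}$. Combining this observation with the reasoning already used in the proof of \cref{theorem:backlog}, one can bound $\vert W_{f,w}(t_1,t_2) - W_{g,w}(t_1,t_2)\vert$ by the same $2(U+Q^u)$ even when one of the clients is only intermittently backlogged at $w$, since the $U$ term absorbs at most one in-flight request's worth of service and the $Q^u$ term absorbs one unredeemed local quantum.

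The main obstacle I expect is the bookkeeping around these backlog transitions -- the moments at which a client's local queue at $w$ drains and is later replenished by the global scheduler -- and showing that each such segment stays within the per-worker slack of $2(U+Q^u)$ rather than compounding across refills. My plan for this step is to argue that between any two consecutive refills to $q_{i,w}$ the local DLPM invariants give a bounded residual counter, and that concatenating such segments telescopes cleanly because each refill is uniform across workers. Once this accounting is in place, the final inequality follows by simply summing the per-worker bounds over the $\vert W\vert$ workers and taking the maximum over all backlogged clients.
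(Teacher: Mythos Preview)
Your overall decomposition---write $W_i(t_1,t_2)=\sum_{w\in W}W_{i,w}(t_1,t_2)$, bound each summand, and add---is exactly what the paper does, and you are right to flag the ``global-backlog does not obviously imply local-backlog'' issue as the crux. Where your plan diverges is in how you resolve that issue.

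The paper's resolution is much shorter than the route you sketch: it argues directly that under \distname a globally backlogged client \emph{is} locally backlogged at every worker. The reason is the \textsc{SelectWorker} mechanism in \cref{alg:radix-tree}: for a client $i$, a worker $w$ is only eligible while $q_{i,w}>0$, and no $q_{i,w}$ is replenished until \emph{all} workers satisfy $q_{i,w}\leq 0$. Hence a client with an unbounded supply of requests must have had requests dispatched to (and therefore queued at) every worker between any two global refills, so each worker sees that client as continuously backlogged. With that, \cref{theorem:bounded-service-difference} applies verbatim at each of the $|W|$ workers and summing gives $2|W|(U+Q^u)$. No segment-by-segment bookkeeping or appeal to \cref{theorem:backlog} is needed.

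Your alternative has a real gap as stated. You claim the per-worker bound $\lvert W_{f,w}-W_{g,w}\rvert\leq 2(U+Q^u)$ holds ``even when one of the clients is only intermittently backlogged at $w$'', citing the reasoning of \cref{theorem:backlog}. But \cref{theorem:backlog} is one-sided: if $f$ is backlogged at $w$ and $g$ is not, it yields only $W_{g,w}-W_{f,w}\leq 2(U+Q^u)$; the reverse direction $W_{f,w}-W_{g,w}$ is unbounded (a locally absent client receives arbitrarily less than a locally backlogged one). So after the triangle inequality $\lvert W_f-W_g\rvert\leq\sum_w\lvert W_{f,w}-W_{g,w}\rvert$, the summands at workers where $g$ happens to be locally idle are not controlled by anything you have written. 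Your telescoping sketch does not fix this, because the uncontrolled direction does not cancel across refills. The clean way out is precisely the paper's step: use the global deficit counters $q_{i,w}$ to conclude that local idleness cannot occur for a globally backlogged client, and then invoke \cref{theorem:bounded-service-difference} directly.
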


\begin{restatable}
[\textbf{Service bound between backlogged and non-backlogged clients}]{theorem}{doubleQBacklog} Consider any execution of the \distname scheme. Client $f$ that is continuously backlogged during time interval $[t_1,t_2)$ should not receive less service than another client, g, that is not continuously backlogged during the same time interval, where
$W_g(t_1,t_2) - W_f(t_1,t_2) \leq 2 \cdot (U + Q^u) \cdot \vert W \vert$.
\end{restatable}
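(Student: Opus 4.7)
The plan is to mirror the proof of Theorem~\ref{theorem:backlog} (the single-worker non-backlogged bound), but working with the vector of per-worker global deficit counters $(q_{i,w})_{w\in W}$ instead of a single counter $q_i$, and then sum across workers to pick up the $|W|$ factor. The backbone is the same service-accounting identity used for DLPM: between two times $t_1<t_2$ the service received by client $i$ at worker $w$ equals the quantum that has been poured into $q_{i,w}$ minus the net change in the counter, plus a bounded ``in-flight'' correction.

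First, I would decompose $W_i(t_1,t_2)=\sum_{w\in W} W_{i,w}(t_1,t_2)$. Observe from \textsc{SelectWorker} in Algorithm~\ref{alg:radix-tree} that each replenishment round for client $i$ is \emph{uniform across workers}: every round adds $Q^w$ to $q_{i,w}$ for all $w$ simultaneously. Let $k_i$ be the number of such rounds for client $i$ during $[t_1,t_2)$, giving the accounting
\[
W_{i,w}(t_1,t_2) \;=\; q_{i,w}(t_1)-q_{i,w}(t_2) + k_i\cdot Q^w \pm O(U),
\]
where the $O(U)$ slack absorbs the fact that the global counter is updated at dispatch time (Lines~\ref{line:subtract_Q_input} and \ref{line:subtract_Q}) while local service may lag by at most one maximal request of cost $U$.

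Second, I would prove $k_g\le k_f$: since $f$ is continuously backlogged, whenever all $q_{f,w}$ drop to $\le 0$ the \textsc{SelectWorker} loop immediately fires a replenishment, because some request of $f$ is always waiting to be dispatched. The non-backlogged client $g$, by contrast, cannot trigger rounds during its idle sub-intervals. Summed over workers this gives $(k_g-k_f)\sum_w Q^w \le 0$, which is the key sign argument borrowed from the proof of Theorem~\ref{theorem:backlog}. Combined with a per-worker counter-range bound $|q_{i,w}(t_1)-q_{i,w}(t_2)|\le U+Q^u$ (the direct analogue of the single-worker invariant behind Theorem~\ref{theorem:bounded-service-difference}, to be established in the appendix along the same lines), summing over $w\in W$ yields
\[
W_g(t_1,t_2)-W_f(t_1,t_2) \;\le\; \sum_{w\in W} 2(U+Q^u) \;=\; 2\,|W|\,(U+Q^u),
\]
as required.

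The main obstacle will be the per-worker counter-range lemma in the distributed regime. In the single-worker DLPM, a counter is decremented synchronously in the same scheduling loop that admits the request (Lines~\ref{line:subtract_c}--\ref{line:subtract_c_finish}), so invariants are easy to maintain. In \distname{} the global counter is updated at dispatch, but the local DLPM on each worker controls \emph{when} service actually happens and may reorder requests, and the three concurrent streams of Algorithm~\ref{alg:radix-tree} update $q_{i,w}$ asynchronously. I would need to argue that between any two time points, the discrepancy between the dispatched counter drain and the actually served amount is bounded by a single $U$, and also formally pin down ``continuously backlogged'' in the distributed sense (it should mean the client always has at least one pending request somewhere in the system) so that the replenishment-comparison $k_g\le k_f$ is valid. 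Once those two subtleties are handled, the rest is the straightforward per-worker lift of the DLPM proof.
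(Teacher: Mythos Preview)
Your route differs from the paper's and has two concrete gaps.

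The paper does not use the global routing counters $q_{i,w}$ of Algorithm~\ref{alg:radix-tree} at all. Its argument is that each worker runs DLPM locally (with its own per-client counter and quantum $Q^u$), and that a client which is globally backlogged is necessarily backlogged \emph{at every worker}: \textsc{SelectWorker} forces that client's requests to be spread across all workers, exhausting every $q_{i,w}$, before any global replenishment can fire. With per-worker backlog established, Theorem~\ref{theorem:backlog} applies verbatim at each worker, contributing $2(U+Q^u)$, and summing over the $|W|$ workers gives the claim. The global quantum $Q^w$ never enters the bound.

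Your global-counter accounting breaks in two places. First, the counters $q_{i,w}$ in Algorithm~\ref{alg:radix-tree} are replenished by $Q^w$, so any range invariant you establish has the form $|q_{i,w}(t_1)-q_{i,w}(t_2)|\le U'+Q^w$ for some $U'$, not $U+Q^u$; carrying this through yields a bound in $Q^w$, which does not match the stated theorem. Second, and more seriously, the ``in-flight'' slack is not $O(U)$. The global counter is debited by \emph{input} tokens at dispatch time (Line~\ref{line:subtract_Q_input}) and by output tokens only upon completion (Line~\ref{line:subtract_Q}), whereas $W_{i,w}(t_1,t_2)$ counts \emph{processed} extend and output tokens at the worker. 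Arbitrarily many requests of client $i$ can sit in worker $w$'s local queue between dispatch and processing, so the gap between the global-counter drain and the actual service scales with that queue length, not with a single $U$. Your last paragraph flags exactly this difficulty, but the resolution you propose (``at most one maximal request'') is precisely what fails. The paper sidesteps both issues by working with the local DLPM counters, where the service-accounting identity holds exactly and the quantum is $Q^u$.
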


Since there are no requests rejected to enforce fairness, \distname scheduling is work-conserving.

\section{Evaluation}
\label{sec:experiment}
\subsection{Setup}

\paragraph{Implementation}
We implement our DLPM and \distname schedulers in Python on top of SGLang~\cite{zheng2023efficiently}, a fast industry-standard LLM inference system.

\paragraph{Models and Hardware}
Our evaluation is conducted on the widely-used model \texttt{Llama-3.1-8B} and \texttt{Llama-3.2-3B}~\cite{dubey2024llama}. Other transformer-based LLMs such as Qwen~\cite{yang2024qwen2}, DeepSeek~\cite{deepseek}, and Mistral~\cite{jiang2023mistral} share a similar backbone architecture and are also compatible with our system.
For hardware, we test on NVIDIA A100 80GB and A10G GPUs.

\begin{table}[ht]
\centering
\footnotesize
\caption{Workload configurations.}
\vspace{-1em}
\resizebox{\linewidth}{!}{
\setlength{\tabcolsep}{2pt} 
\begin{tabular}{ccccc}
\toprule
\textbf{Workload} & \textbf{Dataset} & \textbf{Avg Prefix Len.}  & \textbf{Avg Output Len.} \\
\midrule
Long-Context QA & LooGLE~\cite{li2023loogle} & 21449  & 15 \\
Tree-of-Thoughts & GSM8K~\cite{cobbe2021training} & 546
 & 256 \\
LLM-as-a-Judge & Synthetic articles~\cite{zheng2023efficiently} &  2701

  & 256 \\
Real Multi-Turn & Chatbot Arena~\cite{zheng2023judging} & 56 & 142 \\
\bottomrule
\end{tabular}
}
\label{tab:workload_setting}
\end{table}

\paragraph{Workloads and Datasets}
We evaluate the efficiency and effectiveness of the schedulers on 4 diverse LLM-based workloads, each characterized by its unique execution graph structures (\cref{fig:demo}) and variations in prefix and output length distributions. as detailed in \cref{tab:workload_setting}. Specifically, we evaluate long document understanding using the LooGLE~\cite{li2023loogle} dataset. We implement the Tree-of-Thoughts~\cite{yao2024tree} program for solving GSM8K~\cite{cobbe2021training} problems (with a tree height of 4), and the LLM-as-a-Judge~\cite{zheng2023efficiently} program, which utilizes the branch-solve-merge technique to evaluate synthetic articles. We also conduct experiments on real-world multi-turn conversation traces from Chatbot Arena~\cite{zheng2023judging}.

\begin{figure}[ht]
\centering
\includegraphics[width=0.47\textwidth]
{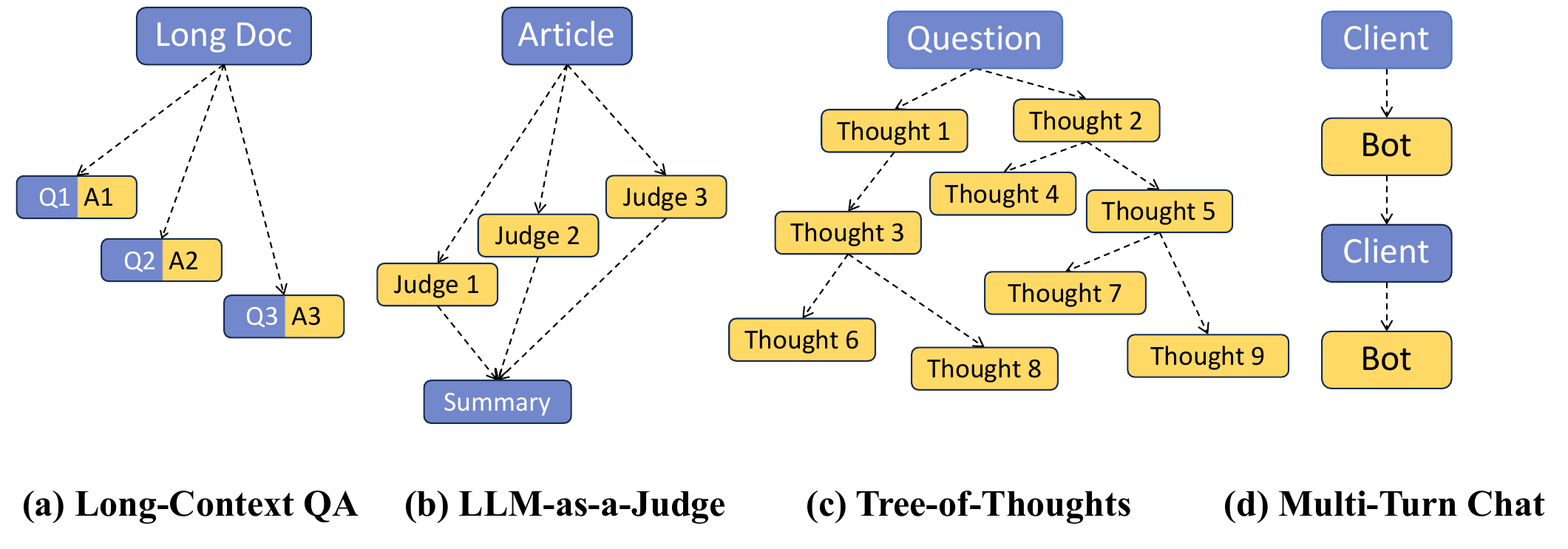}
\caption{Illustration of the execution graphs of different workloads in our benchmark. }
\vspace{-1em}
\label{fig:demo}
\end{figure}

\paragraph{Synthetic Traces} For Long-Context QA, Tree-of-Thoughts, and LLM-as-a-Judge, we generate synthetic client request traces following the Gamma process, as done in~\cite{vtc,sheng2023slora,li2023alpaserve}, with the request rate increasing as the number of GPUs scales.

For these three workloads, we evaluate two distinct types of misbehaving patterns, as detailed in \cref{tab:model_setting}. The first type (S1) involves a misbehaving client sending more requests than well-behaved clients. Specifically, although all clients send programs at the same request rate, the misbehaving client submits programs with a more complex execution graph (\eg, more branches in Tree-of-Thoughts). The second type (S2) features a misbehaving client sending programs with the same structural complexity and at the same request rate as well-behaved clients, but with the input altered to increase the prefix length. These workloads are evaluated with the \texttt{Llama-3.1-8B} model served on NVIDIA A100 GPUs. The related results are reported in \cref{sec:synthetic}.

\begin{table}[ht]
\centering
\footnotesize
\caption{Synthetic workload configurations. \faThumbsODown\ stands for misbehaving client and \faThumbsOUp\ denotes well-behaved clients. 
}
\vspace{-1em}
\resizebox{0.98\linewidth}{!}{
\begin{tabular}{cl}
\toprule
\textbf{Workload}   &\textbf{Detailed Behavior} \\
\midrule
\rowcolor{black!10} \multicolumn{2}{l}{\textit{S1: More Requests}} \\
Long-Context QA  & \faThumbsODown: Higher req rate\\
\multirow{2}{*}{Tree-of-Thoughts}  & \faThumbsODown: Trees of 4 branches (340 req per tree) \\
                                  & \faThumbsOUp: Trees of 2 branches (30 req per tree)\\
\multirow{2}{*}{LLM-as-Judge}     & \faThumbsODown: Evaluation with 16 dimensions\\
                                  & \faThumbsOUp: Evaluation with 2 dimensions\\
\rowcolor{black!10} \multicolumn{2}{l}{\textit{S2: Longer Prefix}} \\                             
Long-Context QA  & \faThumbsODown: $2\times$ longer input documents\\
Tree-of-Thoughts  & \faThumbsODown: $10\times$ longer input questions \\
LLM-as-Judge     & \faThumbsODown: Extra 600 tokens before each article\\
\bottomrule
\end{tabular}
}
\label{tab:model_setting}
\end{table}

\paragraph{Real-world Traces} For real-world multi-turn conversation, we re-scale the request time stamps provided in the dataset\footnote{\url{https://huggingface.co/datasets/lmsys/chatbot_arena_conversations}} and aggregate multiple clients’ requests to closely mimic high-demand scenarios. This workload is evaluated with the \texttt{Llama-3.2-3B} model served on NVIDIA A10G GPUs.
The related results are reported in \cref{sec:real}.

\begin{figure*}[!t]
	\centering
	\includegraphics[width=\textwidth]{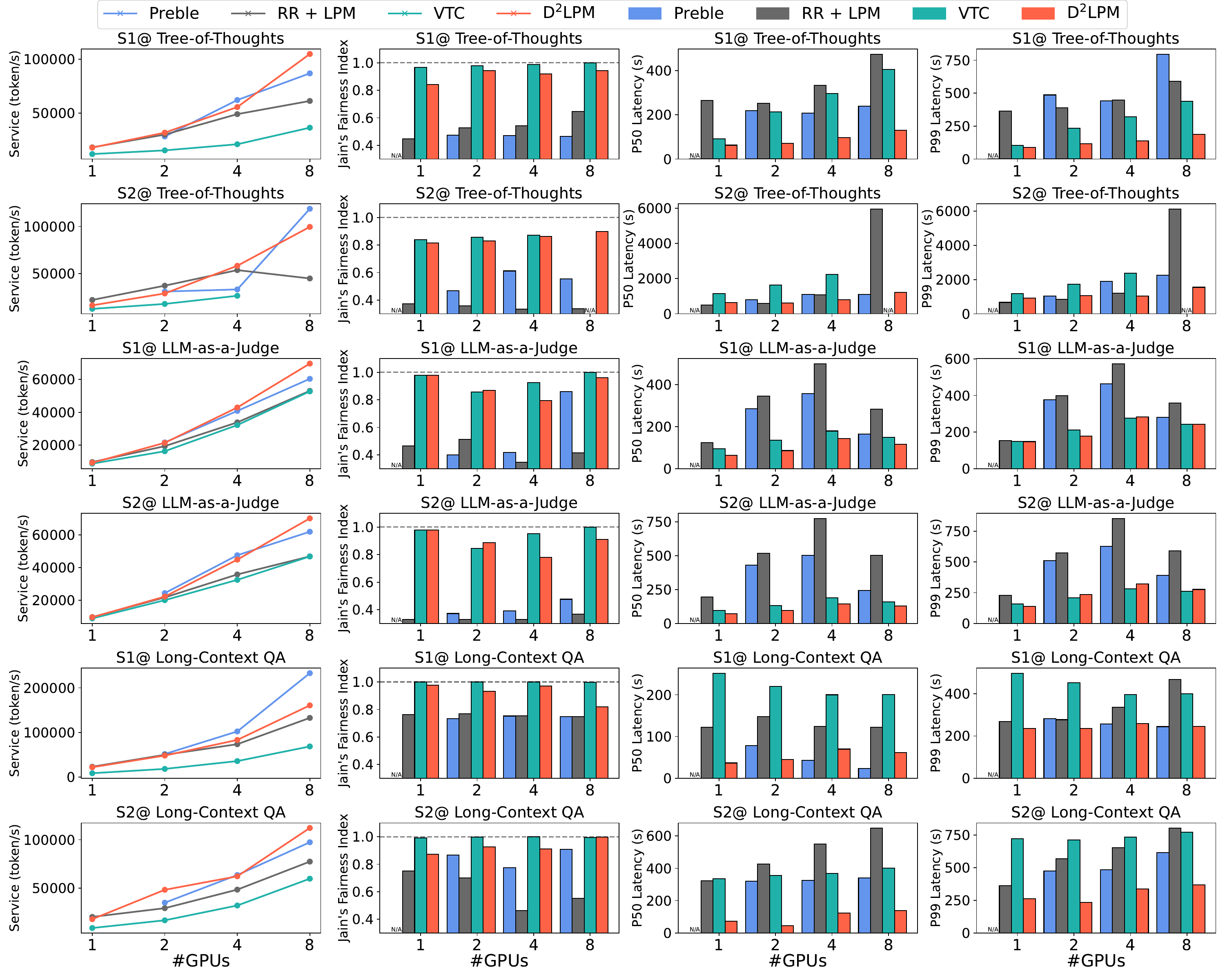}
	\vspace{-1mm}
\caption{Summary of results across three datasets and two types of misbehaving clients on up to 8 A100 GPUs (8B model). The reported latency represents the average latency for well-behaved clients. The data point for S2@Tree-of-Thoughts with $D = 8$ is omitted, as it takes too long to complete.}
	\label{fig:main plot}
	\vspace{-2mm}
\end{figure*}

\paragraph{Baselines}
We compare DLPM and \distname with three baseline scheduling algorithms. 
\begin{itemize}
    \item \textbf{\distname}: The local worker adopts DLPM, and the global scheduler runs the \distname algorithm when Data Parallelism Degree $D > 1$.
    \item \textbf{RR + LPM}: The local scheduler runs LPM, and the global scheduler uses the round-robin (RR) algorithm when \(D>1\). It is the default distributed scheduling algorithm in SGLang~\cite{zheng2023efficiently} without fairness guarantees.
    \item \textbf{Preble}~\cite{preble}: Preble is a state-of-the-art distributed LLM serving system that aims to provide high serving throughput by balancing load distribution and locality, yet without fairness guarantee. Specifically, it dispatches requests based on a pre-defined prefix-matching ratio to decide whether to explore a new GPU or exploit locality.
    \item \textbf{VTC}~\cite{vtc}: The local scheduler runs VTC, and the global scheduler applies a per-client round-robin strategy when \(D>1\). 
    Extending VTC with round-robin scheduling is the straightforward approach to ensuring fairness in distributed settings, with fairness bound proven in \cref{sec:DRR client}. 
\end{itemize}

\paragraph{Metrics} To measure the system efficiency and fairness achieved by different scheduling algorithms, we use the following three metrics:
\begin{itemize}
    \item \textbf{Service Rate}: We measure the clients' service as a weighted sum of the number of input tokens and the number of output tokens, following VTC~\cite{vtc}\footnote{Note that this service rate is from clients' perspective. From the system's perspective, the actual service is measured by the cost function using the number of extend tokens.}. As discussed in \cref{sec:preliminary}, the weight for input token is $1$ and the weight for output token is $2$.
    \item \textbf{Jain's Fairness Index}~\cite{jain1984quantitative} is a widely-used metric for evaluating the fairness of resource allocation in networked systems~\cite{fairnessinnetwork}. The index is mathematically defined as:
    \[
    J(x_1, x_2, \ldots, x_n) = \frac{\left(\sum_{i=1}^n x_i\right)^2}{n \sum_{i=1}^n x_i^2},
    \]
    where $x_i$ represents the allocation for the $i^{\text{th}}$ client, and $n$ is the total number of clients. The value of $J$ ranges from $\frac{1}{n}$ (minimum fairness, when one client monopolizes all resources) to 1 (maximum fairness, when resources are equally distributed). 
    In our context, we compute the Jain's Fairness Index by letting $x_i$ denote the service rate of client $i$. The calculation is based on the time interval during which all clients are active, ensuring an accurate representation of fairness across the system. 
    \item \textbf{P50 and P99 Latency}: We assess the scheduler's effectiveness in maintaining service quality for well-behaved clients by measuring their P50 and P99 latency.
    We measure latency using the end-to-end completion time for program evaluation. 
    We use the TTFT (Time to First Token) latencyc\footnote{For QA tasks, a shorter TTFT contributes to improved client experience.} metric for long-context QA tasks. 
\end{itemize}

\subsection{Results on Synthetic Traces}\label{sec:synthetic}

We present all three metrics across three workloads and two types of misbehaving clients in \cref{fig:main plot}. Both VTC and \distname provide theoretical fairness guarantees, whereas Preble and RR + LPM do not. The data point for Preble with \(D=1\) is omitted because Preble is designed as a multi-GPU cache-aware prompt dispatch system.

\paragraph{Throughput Analysis}
As previously discussed, ensuring fairness inherently competes with maximizing throughput. However, \distname's effective global cache-aware scheduler and DLPM enable significant performance gains, achieving up to a 2.87$\times$ improvement compared to the only other fair algorithm, VTC.

\distname achieves better throughput than RR + LPM, with improvements of up to 2.22$\times$. In the case of S2@Tree-of-Thoughts, the poor performance of RR + LPM with $D = 8$ compared to $D = 4$ can be attributed to the complex sharing patterns inherent in Tree-of-Thoughts. Round Robin fails to preserve locality among GPUs, leading to a significant drop in cache hit rate (\ie, from $95\%$ to $50\%$). This limitation indicates that RR + LPM does not scale effectively when clients submit complex LLM programs.

Compared to Preble, \distname consistently matches or exceeds its performance across all workloads and GPU configurations, demonstrating its ability to sustain high throughput while ensuring fairness. An exception arises for S2@Long-Context QA with $D = 8$, where Preble outperforms \distname in throughput. This discrepancy occurs because \distname sacrifices some locality to maintain fairness, resulting in increased prefix recompute overhead. As indicated in \cref{tab:workload_setting}, the LooGLE dataset features an exceptionally high prefix length-to-output length ratio. In this case, the cost of recomputing long documents becomes substantial, with the prefill stage significantly dominating the generation time. Consequently, the Long-Context QA workload serves as a worst-case scenario that adversely impacts \distname's throughput. However, when the prefix length-to-output length ratio falls within a reasonable range, the \distname algorithm consistently matches and even slightly surpasses the performance of state-of-the-art non-fair scheduling algorithms. This is achieved through the careful management of load balance and locality trade-offs in the global scheduler, as well as locality and fairness trade-offs in the local scheduler.

\paragraph{Jain's Fairness Index Analysis}
From the second column in \cref{fig:main plot}, it is clear that \distname consistently outperforms both Preble and RR + LPM. This is because \distname provides strict fairness guarantees. However, it is slightly less fair than VTC, as \distname relaxes the fairness bounds to improve locality, which leads to higher throughput but slightly worse fairness control. 
Preble performs slightly better than RR + LPM due to its multi-level priority wait queue, which avoids starvation but cannot provide isolation and strict fairness guarantees. As a result, there remains a notable gap between Preble and \distname.

\paragraph{Well-behaved Clients' Latency Analysis}
We use the average P50 and P99 latency of well-behaved clients to evaluate the experience of well-behaved clients when a misbehaving client is present. Algorithms focusing on high system efficiency might inadvertently increase latency for well-behaved clients as these schedulers may prioritize the requests from the misbehaving clients to optimize the prefix cache hit rate.
Preble and RR + LPM, therefore, can result in up to $7.18\times$ and $9.55\times$ higher latency, respectively, compared to \distname. On average, \distname achieves $2.90\times$ and $4.06\times$ lower latency than Preble and RR + LPM.
On the other hand, algorithms that focus solely on fairness will also incur high latency for well-behaved clients due to reduced overall system efficiency.
For instance, VTC can lead to latency up to $7.96\times$ higher than \distname, with an average latency increase of $2.98\times$.

\subsection{Results on Real-world Traces}\label{sec:real}
\begin{figure}[!t]
    \centering
    \includegraphics[width=\columnwidth]{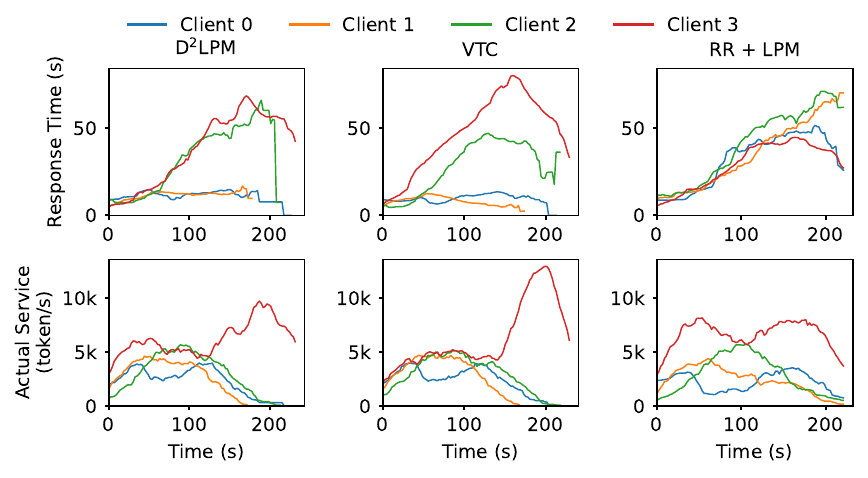}
    \caption{Fairness and performance visualization for the real-world multi-turn conversation workload ($D=2$). Clients 2 and 3 send requests at a much higher rate than Clients 0 and 1.
    }
    \label{fig:multi-turn-exp}
\end{figure}

Figure~\ref{fig:multi-turn-exp} demonstrates the fairness and performance comparison of different schedulers on the real-world multi-turn conversation workload. 
In this workload, Clients 2 and 3 initially send excessive number of requests, and Client 2 returns to normal midway. A fair scheduler should prevent these two clients from impacting other clients. 
Due to LPM’s prioritization strategy, which favors requests with longer prefix matches, Clients 2 and 3 receive a disproportionately large share of resources. As a result, Clients 0 and 1 suffer from high response times and reduced throughput. In contrast, VTC achieves relatively low response times and maintains high throughput for Clients 0 and 1. However, such strict fair allocation comes at the expense of Clients 2 and 3, who endure substantial response delays, reaching up to 80 seconds.

\distname achieves a more reasonable distribution of resources, protecting well-behaved clients from the disruptive effects of high request rates by the misbehaving clients. \distname ensures consistently low response time and high throughput for both well-behaved and previously misbehaving (\ie, Client 2) clients. Thus, \distname not only mitigates the impact of malicious usage patterns but also improves overall system performance and fairness compared to the baseline approaches.

\section{Ablation Studies}

\subsection{Visualization of Fairness Properties}\label{sec:tot_vis}

We visualize the response time and the services provided by the server to different clients over time in \cref{fig:tot_visual}. The experiments use 4 A10G GPUs as the testbed, with all clients sending Tree-of-Thoughts programs at the same rate and with a consistent branch count of 3. However, client 0 is misbehaving by sending a longer prefix, i.e. $10\times$ longer than well-behaved clients. The maximum value on the x-axis represents the end-to-end completion time of all programs. As observed, \distname achieves the shortest execution time, demonstrating up to $2\times$ speedup compared to VTC and Preble.

\begin{figure}[ht]
    \centering
    \includegraphics[width=\columnwidth]{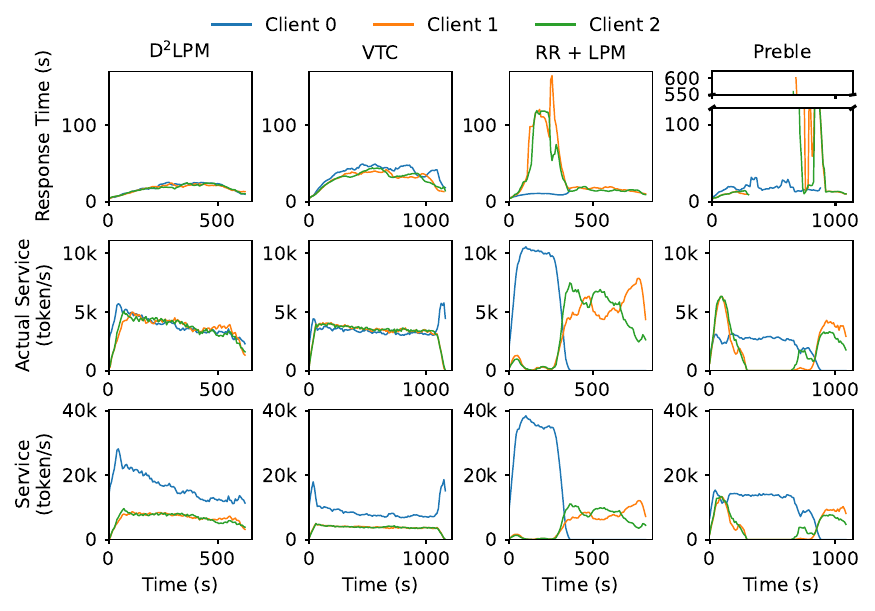}
    \caption{Fairness and performance visualization of different schedulers on Tree-of-Thoughts workloads with $D = 4$ (3B model + 4 A10G GPUs). The maximum value on the X-axis represents the end-to-end completion time for each scheduler. The actual service is calculated using the cost function defined in \cref{sec:preliminary}, which is a weighted sum of the number of extend tokens and the number of output tokens.}
    \label{fig:tot_visual}
\end{figure}

From the first row of the figure, we observe that \distname consistently maintains lower response times compared to VTC by preserving a higher degree of locality, enhancing overall system efficiency.
Furthermore, it avoids the excessively high response time caused by schedulers like RR + LPM and Preble, which lack fairness control. These schedulers tend to prioritize serving client 0, resulting in substantial delays for other clients. For instance, as shown in the figure, the service received by clients 1 and 2 between 100s and 700s is almost zero for Preble, causing a queuing latency of up to 600 seconds.

The second and third rows depict the actual service and the service received by each client, respectively. As shown in the second row of \cref{fig:tot_visual}, both VTC and \distname achieve an ideal sharing of resources across the 4 GPUs in terms of actual service. In the third row, we can observe that the service rate of client 0 is higher than clients 1 and 2 -- this is because client 0 has longer prefix sharing and thus lower cost per token. However, due to the relatively low cache hit rate of VTC, it experiences worse end-to-end performance. In contrast, the other two algorithms demonstrate significant unfairness in resource allocation across clients.

A key highlight here is the extremely low throughput observed with Preble. Preble prioritizes dispatching requests to the GPU with the longest prefix-matching length, provided the matching length exceeds a predefined threshold. Between 300 and 600 seconds, client 0's requests are continuously dispatched to a single GPU as the prefix-matching ratio will always exceed the pre-defined threshold. Some requests from clients 1 and 2 get queued at this monopolized GPU, which blocks these clients from generating new requests (i.e., "deeper" thoughts), due to the inherent LLM call dependencies in the Tree-of-Thoughts programs. This results in severe workload imbalance among the GPUs, with the cluster at merely 1/4 of its potential computational capacity.

\subsection{Impact of $Q^{u}$ in DLPM and $Q^{w}$ in \distname}
\begin{figure}[ht]
    \centering
    \begin{subfigure}[t]{0.50\columnwidth}
        \centering
        \includegraphics[width=\linewidth]{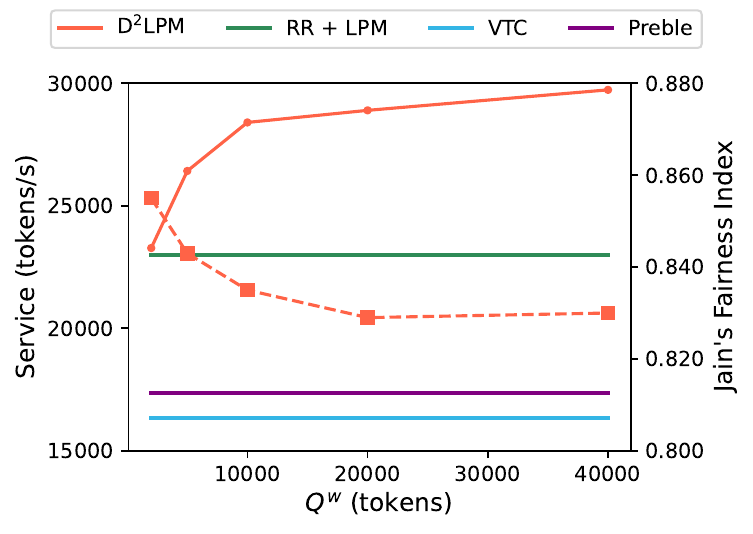}
        \caption{Throughput of Tree-of-Thoughts with one misbehaving client.}
        \label{fig:subfig1}
    \end{subfigure}
    \hfill
    \begin{subfigure}[t]{0.48\columnwidth}
        \centering
        \includegraphics[width=\linewidth]{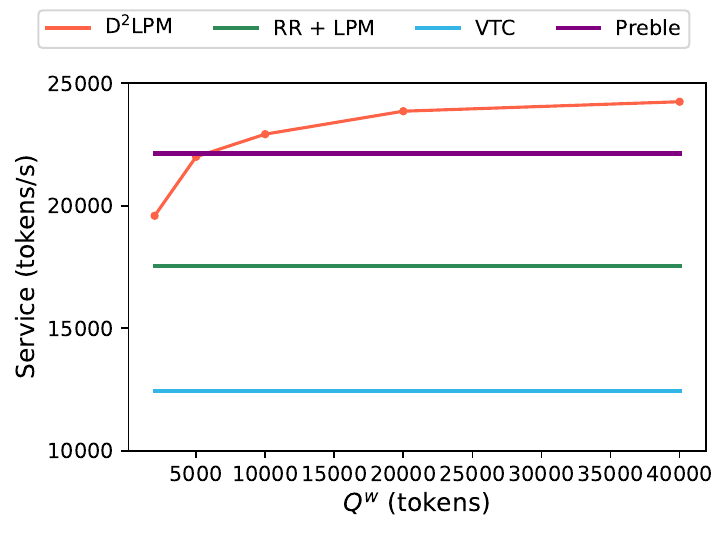}
        \caption{Throughput of Tree-of-Thoughts with all well-behaved clients.}
        \label{fig:subfig2}
    \end{subfigure}
\caption{Impact of $Q^{w}$ on throughput under different workloads ($D=4$). The solid line represents throughput, while the dashed line represents Jain's Index. The fairness index in (b) is omitted as it consistently equals 1. }

    \label{fig:globalq}
\end{figure}

We now examine the trade-off between locality and fairness using $Q^{u}$ and $Q^{w}$. The impact of $Q^{u}$ is illustrated in \cref{fig:pareto}, where increasing $Q^{u}$ enhances throughput but compromises fairness control. By adjusting the value of $Q^{u}$, the server can achieve a tailored trade-off between performance and fairness, defining a new Pareto frontier compared to VTC and LPM.

\cref{fig:globalq} illustrates the impact of $Q^{w}$ on throughput in \distname. To recap, $Q^{w}$ represents the quantum of service assigned to each worker in \distname, where a larger $Q^{w}$ typically implies a better locality for requests within a client. As shown in \cref{fig:globalq}, as $Q^{w}$ increases, the throughput of \distname also increases, eventually stabilizing and surpassing all other schedulers. The low throughput of Preble, as seen in \cref{fig:subfig1}, has been explained earlier in \cref{sec:tot_vis}.

Although $Q^{w}$ is not in the fairness bound of \distname as demonstrated in \cref{sec:Doubleq bound}, it does slightly affect Jain's Fairness Index. Specifically, the index decreases from 0.855 to 0.83 when $Q^{w}$ increases from 2000 to 40000, due to the more unbalanced dispatching of requests within a client \footnote{When $Q^{w}$ is set to infinity, the algorithm is reduced to be similar as Preble, which lacks fairness guarantees since the difference in load across workers becomes unbounded, as proven in Theorem~\ref{theorem:infiniteQ}}.

\subsection{Scaling the Number of Clients}

We assess DLPM's performance as we increase the number of clients from 5 to 50, using a single A10 GPU as the testbed, while maintaining a constant total request rate. As depicted in \cref{fig:scaleuser}, DLPM consistently achieves a service rate comparable to LPM, even as the number of clients increases, whereas VTC consistently underperforms.

\begin{figure}[ht]
    \centering
    \includegraphics[width=0.75\columnwidth]{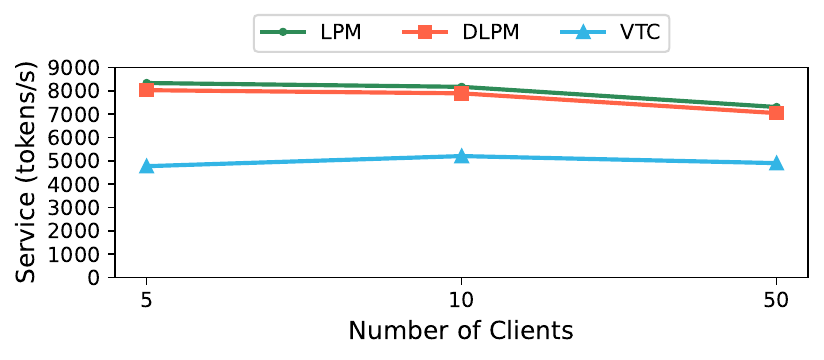}
    \caption{Service rate w.r.t the number of clients on a single A10 GPU (3B model).}
    \label{fig:scaleuser}
\end{figure}

Note that as the number of clients rises, the number of distinct prefixes in the same volume of requests increases, which marginally reduces the cache hit rate for both DLPM and LPM, leading to a slight decrease in service rate as the number of clients increases.
In contrast, VTC's performance is less affected since its cache hit rate is consistently low regardless of the number of clients.

\subsection{Mix of Workloads}
As a complement to the single-workload scenario discussed earlier, we now explore a more realistic setting where clients handle diverse workloads. As shown in \cref{fig:mix}, DLPM consistently achieves better response time and end-to-end execution times compared to the other schedulers. In the LPM scheduler, clients sending Tree-of-Thoughts programs act as misbehaving clients, significantly increasing the response time for other clients. From the second row, we observe that VTC exhibits better fairness control than DLPM, as it provides more evenly distributed actual service across clients. This demonstrates that DLPM sacrifices some degree of fairness to achieve higher throughput.

\begin{figure}[ht]
    \centering
    \includegraphics[width=\columnwidth]{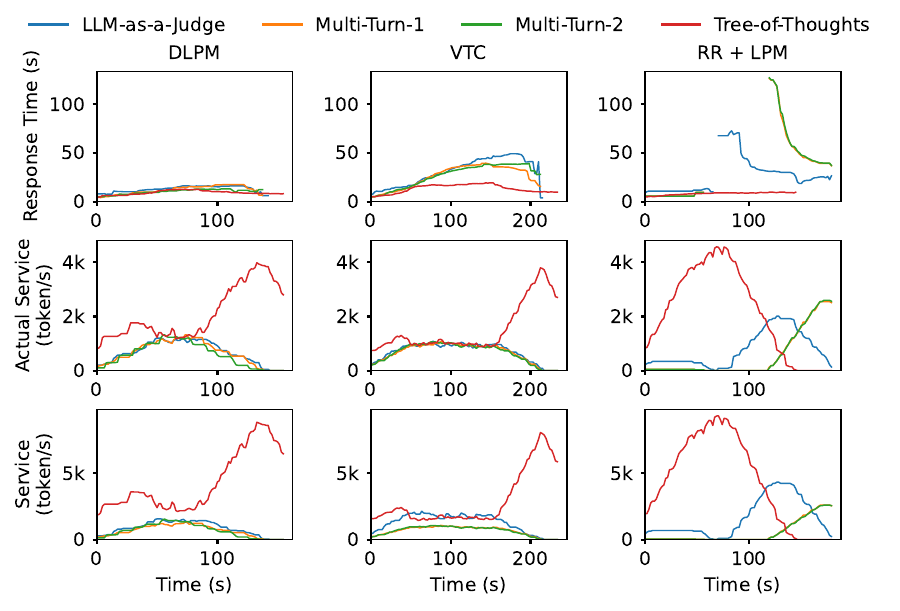}
    \caption{Mix of workloads among four clients: two engage in multi-turn conversations, while the other two send different programs, all within a single-GPU setup (3B model + an A10G GPU).}
    \label{fig:mix}
    \vspace{-1em}
\end{figure}

\section{Related Work}
\paragraph{Fairness in ML Workloads}
ML training workloads have extensively studied the fairness problems in shared clusters~\cite{mahajan2020themis,narayanan2020heterogeneity,chaudhary2020balancing,qiao2021pollux}. 
Due to their unique characteristics such as long running time, placement sensitivity, and statistical efficiency (\ie, the amount of progress per unit of data consumed), traditional fair scheduling for big data workloads~\cite{isard2009quincy,robert2016carbyne} does not work well. 
To handle the long-running and placement-sensitive natures of ML training workloads, Themsis~\cite{mahajan2020themis} proposes new finish-time fairness metrics, and leverages multi-round partial allocation auctions to provide Pareto-efficient and envy-free resource allocations. 
To consider statistical efficiency for higher cluster-wide resource utilization, Pollux~\cite{qiao2021pollux} introduces goodput-driven cluster scheduling by jointly optimizing resource allocations and job batch sizes. 
On the other hand, prior work VTC and our work focus on the LLM inference-time fairness. 
Compared to VTC, our work co-optimizes both fairness and prefix sharing for higher performance without losing fairness. 

\paragraph{Fairness in Other Workloads}
Fairness is a long-existing topic in networking and operating systems. 
For example, networking needs to guarantee fairness among different switching ports~\cite{drr} and during link bandwidth allocation~\cite{wfq,pgps,sfq,sfq_d,self-clock}; OS scheduling needs to guarantee fair CPU time share among different processes~\cite{linux_cfs, eevdf}, and fair memory allocations~\cite{nesbit2006fairmemory}. 
Fairness is also extensively studied in big data workload scheduling with prominent prior work of Delay Scheduling~\cite{delayschedule} and Dominant Resource Fairness~\cite{drf}. 
Our fair scheduling design is inspired by many prior work such as Deficit Round Robin~\cite{drr} and Delay Scheduling~\cite{delayschedule}; but differently, we explicitly optimize for the prefix sharing property in LLM inference workloads while guaranteeing fairness. 

\paragraph{Locality in LLM Inference} 
Previous advances in LLM inference focus on batching and memory optimization~\cite{kwon2023efficient, yu2022orca}. SGLang further exploits locality in scheduling to improve LLM inference performance for emerging applications such as multi-turn chatting~\cite{zheng2023efficiently}. 
It leverages the LPM scheduling with RadixTree to save GPU memory and avoid redundant computations through prefix sharing. 
Preble~\cite{preble} further extends LPM into distributed settings to jointly optimize load balancing and prefix caching locality for high throughput. 
BlendServe~\cite{blendserve} co-optimizes GPU resource overlapping and prefix sharing for offline LLM inference, achieving nearly optimal inference throughput. 
Unlike the above work which only focuses on inference throughput, our work presents a principled way of navigating the trade-off between performance and fairness in multi-client scenarios. 
\section{Limitation and Future Work}

\paragraph{Prefix Sharing Among Clients}
This work explores the prefix sharing among the inference requests of each individual client, while ignoring the prefix sharing among requests from different clients. 
Exploring how to fairly share prefix cache among different clients would be interesting future work, \eg, deciding which client should pay the quantum deduction when several clients share the same prefix. 

\paragraph{Fairness with In-Program Data Dependencies}
Our work can be further improved by considering the data dependencies between different LLM inference requests in programs. 
For example, an upstream inference request may generate the input data for many downstream requests, leading to higher parallelisms potentially with higher performance. 
In this case, it would be better to first schedule this upstream request, even if doing so may break prefix sharing or strict fairness. 
Similar data dependencies have been explored in the context of cluster scheduling for big data analytics~\cite{chung2020unearthing}.

\section{Conclusion} 
This paper introduces the first prefix-aware fair scheduling algorithm for LLM serving, namely, DLPM. 
We also propose an extension of the algorithm, \distname, to preserve locality with global fairness guarantees in a distributed setup. 
Our algorithm achieves up to 2.87$\times$ higher throughput than state-of-the-art fair scheduling algorithms in LLM like VTC, and 7.18$\times$ lower latency for victim clients compared to locality-aware scheduling algorithms like Preble.

\newpage
\bibliographystyle{plain}
\bibliography{reference}

\begin{thebibliography}{10}

\bibitem{perplexity}
Perplexity AI.
\newblock {Perplexity: Conversational Search Assistant}.
\newblock \url{https://www.perplexity.ai}.

\bibitem{bejerano2004fairness}
Yigal Bejerano, Seung-Jae Han, and Li~Li.
\newblock Fairness and load balancing in wireless lans using association control.
\newblock In {\em Proceedings of the 10th annual international conference on Mobile computing and networking}, pages 315--329, 2004.

\bibitem{brown2024large}
Bradley Brown, Jordan Juravsky, Ryan Ehrlich, Ronald Clark, Quoc~V Le, Christopher R{\'e}, and Azalia Mirhoseini.
\newblock Large language monkeys: Scaling inference compute with repeated sampling.
\newblock {\em arXiv preprint arXiv:2407.21787}, 2024.

\bibitem{chaudhary2020balancing}
Shubham Chaudhary, Ramachandran Ramjee, Muthian Sivathanu, Nipun Kwatra, and Srinidhi Viswanatha.
\newblock Balancing efficiency and fairness in heterogeneous gpu clusters for deep learning.
\newblock In {\em Proceedings of the Fifteenth European Conference on Computer Systems}, pages 1--16, 2020.

\bibitem{chung2020unearthing}
Andrew Chung, Subru Krishnan, Konstantinos Karanasos, Carlo Curino, and Gregory~R Ganger.
\newblock {Unearthing Inter-Job Dependencies for Better Cluster Scheduling}.
\newblock In {\em 14th USENIX Symposium on Operating Systems Design and Implementation (OSDI 20)}, pages 1205--1223, 2020.

\bibitem{cobbe2021training}
Karl Cobbe, Vineet Kosaraju, Mohammad Bavarian, Mark Chen, Heewoo Jun, Lukasz Kaiser, Matthias Plappert, Jerry Tworek, Jacob Hilton, Reiichiro Nakano, et~al.
\newblock Training verifiers to solve math word problems.
\newblock {\em arXiv preprint arXiv:2110.14168}, 2021.

\bibitem{deepseek}
Damai Dai, Chengqi Deng, Chenggang Zhao, R.~X. Xu, Huazuo Gao, Deli Chen, Jiashi Li, Wangding Zeng, Xingkai Yu, Y.~Wu, Zhenda Xie, Y.~K. Li, Panpan Huang, Fuli Luo, Chong Ruan, Zhifang Sui, and Wenfeng Liang.
\newblock Deepseekmoe: Towards ultimate expert specialization in mixture-of-experts language models.
\newblock {\em CoRR}, abs/2401.06066, 2024.

\bibitem{deepseek_r1_lite_2024}
DeepSeek.
\newblock Deepseek-r1-lite-preview release.
\newblock \url{https://api-docs.deepseek.com/news/news1120}, 2024.
\newblock Accessed: 2024-11-20.

\bibitem{wfq}
Alan~J. Demers, Srinivasan Keshav, and Scott Shenker.
\newblock Analysis and simulation of a fair queueing algorithm.
\newblock In Lawrence~H. Landweber, editor, {\em {ACM} Symposium on Communications Architectures {\&} Protocols (SIGCOMM)}, pages 1--12. {ACM}, 1989.

\bibitem{dubey2024llama}
Abhimanyu Dubey, Abhinav Jauhri, Abhinav Pandey, Abhishek Kadian, Ahmad Al-Dahle, Aiesha Letman, Akhil Mathur, Alan Schelten, Amy Yang, Angela Fan, et~al.
\newblock The llama 3 herd of models.
\newblock {\em arXiv preprint arXiv:2407.21783}, 2024.

\bibitem{drf}
Ali Ghodsi, Matei Zaharia, Benjamin Hindman, Andy Konwinski, Scott Shenker, and Ion Stoica.
\newblock Dominant resource fairness: fair allocation of multiple resource types.
\newblock In {\em Proceedings of Networks and Systems Design and Implementation (NSDI)}, 2011.

\bibitem{copilot}
{Github}.
\newblock Github copilot: Your ai pair programmer.
\newblock \url{https://github.com/features/copilot}.

\bibitem{self-clock}
S.~Jamaloddin Golestani.
\newblock A self-clocked fair queueing scheme for broadband applications.
\newblock In {\em Proceedings {IEEE} {INFOCOM} '94, The Conference on Computer Communications, Thirteenth Annual Joint Conference of the {IEEE} Computer and Communications Societies, Networking for Global Communications}, pages 636--646. {IEEE} Computer Society, 1994.

\bibitem{sfq}
Pawan Goyal, Harrick~M. Vin, and Haichen Cheng.
\newblock Start-time fair queueing: {A} scheduling algorithm for integrated services packet switching networks.
\newblock In {\em Conference on Applications, Technologies, Architectures, and Protocols for Computer Communication (SIGCOMM)}, pages 157--168. {ACM}, 1996.

\bibitem{robert2016carbyne}
Robert Grandl, Mosharaf Chowdhury, Aditya Akella, and Ganesh Ananthanarayanan.
\newblock Altruistic scheduling in {Multi-Resource} clusters.
\newblock In {\em 12th USENIX Symposium on Operating Systems Design and Implementation (OSDI 16)}, pages 65--80, Savannah, GA, November 2016. USENIX Association.

\bibitem{isard2009quincy}
Michael Isard, Vijayan Prabhakaran, Jon Currey, Udi Wieder, Kunal Talwar, and Andrew Goldberg.
\newblock Quincy: Fair scheduling for distributed computing clusters.
\newblock In {\em ACM Symposium on Operating Systems Principles (SOSP)}, page 261–276. Association for Computing Machinery, 2009.

\bibitem{jain1984quantitative}
Rajendra~K Jain, Dah-Ming~W Chiu, William~R Hawe, et~al.
\newblock A quantitative measure of fairness and discrimination.
\newblock {\em Eastern Research Laboratory, Digital Equipment Corporation, Hudson, MA}, 21:1, 1984.

\bibitem{jiang2023mistral}
Albert~Q Jiang, Alexandre Sablayrolles, Arthur Mensch, Chris Bamford, Devendra~Singh Chaplot, Diego de~las Casas, Florian Bressand, Gianna Lengyel, Guillaume Lample, Lucile Saulnier, et~al.
\newblock Mistral 7b.
\newblock {\em arXiv preprint arXiv:2310.06825}, 2023.

\bibitem{sfq_d}
Wei Jin, Jeffrey~S. Chase, and Jasleen Kaur.
\newblock Interposed proportional sharing for a storage service utility.
\newblock In {\em International Conference on Measurements and Modeling of Computer Systems (SIGMETRICS)}, pages 37--48. {ACM}, 2004.

\bibitem{linux_cfs}
The kernel~development community.
\newblock {CFS Scheduler}.
\newblock \url{https://docs.kernel.org/scheduler/sched-design-CFS.html}.

\bibitem{cfs}
The~Linux kernel~development community.
\newblock Completely fair scheduler.
\newblock \url{https://docs.kernel.org/scheduler/sched-design-CFS.html}.

\bibitem{kwon2023efficient}
Woosuk Kwon, Zhuohan Li, Siyuan Zhuang, Ying Sheng, Lianmin Zheng, Cody~Hao Yu, Joseph Gonzalez, Hao Zhang, and Ion Stoica.
\newblock Efficient memory management for large language model serving with paged attention.
\newblock In {\em Proceedings of the 29th Symposium on Operating Systems Principles}, pages 611--626, 2023.

\bibitem{fairnessinnetwork}
Tian Lan, David Kao, Mung Chiang, and Ashutosh Sabharwal.
\newblock An axiomatic theory of fairness in network resource allocation.
\newblock In {\em 2010 Proceedings IEEE INFOCOM}, pages 1--9, 2010.

\bibitem{li2023loogle}
Jiaqi Li, Mengmeng Wang, Zilong Zheng, and Muhan Zhang.
\newblock Loogle: Can long-context language models understand long contexts?
\newblock {\em arXiv preprint arXiv:2311.04939}, 2023.

\bibitem{li2023alpaserve}
Zhuohan Li, Lianmin Zheng, Yinmin Zhong, Vincent Liu, Ying Sheng, Xin Jin, Yanping Huang, Zhifeng Chen, Hao Zhang, Joseph~E Gonzalez, et~al.
\newblock $\{$AlpaServe$\}$: Statistical multiplexing with model parallelism for deep learning serving.
\newblock In {\em 17th USENIX Symposium on Operating Systems Design and Implementation (OSDI 23)}, pages 663--679, 2023.

\bibitem{madaan2024self}
Aman Madaan, Niket Tandon, Prakhar Gupta, Skyler Hallinan, Luyu Gao, Sarah Wiegreffe, Uri Alon, Nouha Dziri, Shrimai Prabhumoye, Yiming Yang, et~al.
\newblock Self-refine: Iterative refinement with self-feedback.
\newblock {\em Advances in Neural Information Processing Systems}, 36, 2024.

\bibitem{mahajan2020themis}
Kshiteej Mahajan, Arjun Balasubramanian, Arjun Singhvi, Shivaram Venkataraman, Aditya Akella, Amar Phanishayee, and Shuchi Chawla.
\newblock Themis: Fair and efficient gpu cluster scheduling.
\newblock In {\em 17th USENIX Symposium on Networked Systems Design and Implementation (NSDI 20)}, pages 289--304, 2020.

\bibitem{narayanan2020heterogeneity}
Deepak Narayanan, Keshav Santhanam, Fiodar Kazhamiaka, Amar Phanishayee, and Matei Zaharia.
\newblock $\{$Heterogeneity-Aware$\}$ cluster scheduling policies for deep learning workloads.
\newblock In {\em 14th USENIX Symposium on Operating Systems Design and Implementation (OSDI 20)}, pages 481--498, 2020.

\bibitem{nesbit2006fairmemory}
Kyle~J. Nesbit, Nidhi Aggarwal, James Laudon, and James~E. Smith.
\newblock Fair queuing memory systems.
\newblock In {\em 2006 39th Annual IEEE/ACM International Symposium on Microarchitecture (MICRO'06)}, pages 208--222, 2006.

\bibitem{ning2024skeletonofthought}
Xuefei Ning, Zinan Lin, Zixuan Zhou, Zifu Wang, Huazhong Yang, and Yu~Wang.
\newblock Skeleton-of-thought: Prompting {LLM}s for efficient parallel generation.
\newblock In {\em The Twelfth International Conference on Learning Representations}, 2024.

\bibitem{openai2023gpt4}
OpenAI.
\newblock Gpt-4 technical report, 2023.

\bibitem{openai2023rate}
OpenAI.
\newblock Rate limit.
\newblock \url{https://platform.openai.com/docs/guides/rate-limits?context=tier-free}, 2023.

\bibitem{openai_learning_to_reason_2024}
OpenAI.
\newblock Learning to reason with llms.
\newblock \url{https://openai.com/index/learning-to-reason-with-llms/}, 2024.
\newblock Accessed: 2024-11-20.

\bibitem{pgps}
A.K. Parekh and R.G. Gallager.
\newblock A generalized processor sharing approach to flow control in integrated services networks: the single-node case.
\newblock {\em IEEE/ACM Transactions on Networking}, 1(3):344--357, 1993.

\bibitem{Park2023GenerativeAgents}
Joon~Sung Park, Joseph~C. O'Brien, Carrie~J. Cai, Meredith~Ringel Morris, Percy Liang, and Michael~S. Bernstein.
\newblock Generative agents: Interactive simulacra of human behavior.
\newblock In {\em In the 36th Annual ACM Symposium on User Interface Software and Technology (UIST '23)}, UIST '23, New York, NY, USA, 2023. Association for Computing Machinery.

\bibitem{patil2023gorilla}
Shishir~G. Patil, Tianjun Zhang, Xin Wang, and Joseph~E. Gonzalez.
\newblock Gorilla: Large language model connected with massive apis.
\newblock {\em arXiv preprint arXiv:2305.15334}, 2023.

\bibitem{pope2023efficiently}
Reiner Pope, Sholto Douglas, Aakanksha Chowdhery, Jacob Devlin, James Bradbury, Jonathan Heek, Kefan Xiao, Shivani Agrawal, and Jeff Dean.
\newblock Efficiently scaling transformer inference.
\newblock {\em Proceedings of Machine Learning and Systems}, 5, 2023.

\bibitem{putta2024agent}
Pranav Putta, Edmund Mills, Naman Garg, Sumeet Motwani, Chelsea Finn, Divyansh Garg, and Rafael Rafailov.
\newblock Agent q: Advanced reasoning and learning for autonomous ai agents.
\newblock {\em arXiv preprint arXiv:2408.07199}, 2024.

\bibitem{qiao2021pollux}
Aurick Qiao, Sang~Keun Choe, Suhas~Jayaram Subramanya, Willie Neiswanger, Qirong Ho, Hao Zhang, Gregory~R Ganger, and Eric~P Xing.
\newblock Pollux: Co-adaptive cluster scheduling for goodput-optimized deep learning.
\newblock In {\em 15th $\{$USENIX$\}$ Symposium on Operating Systems Design and Implementation ($\{$OSDI$\}$ 21)}, 2021.

\bibitem{schick2023toolformer}
Timo Schick, Jane Dwivedi-Yu, Roberto Dess{\`\i}, Roberta Raileanu, Maria Lomeli, Luke Zettlemoyer, Nicola Cancedda, and Thomas Scialom.
\newblock Toolformer: Language models can teach themselves to use tools.
\newblock {\em arXiv preprint arXiv:2302.04761}, 2023.

\bibitem{sheng2023slora}
Ying Sheng, Shiyi Cao, Dacheng Li, Coleman Hooper, Nicholas Lee, Shuo Yang, Christopher Chou, Banghua Zhu, Lianmin Zheng, Kurt Keutzer, Joseph~E. Gonzalez, and Ion Stoica.
\newblock S-lora: Serving thousands of concurrent lora adapters.
\newblock {\em arXiv preprint arXiv:2311.03285}, 2023.

\bibitem{vtc}
Ying Sheng, Shiyi Cao, Dacheng Li, Banghua Zhu, Zhuohan Li, Danyang Zhuo, Joseph~E Gonzalez, and Ion Stoica.
\newblock {Fairness in Serving Large Language Models}.
\newblock In {\em 18th USENIX Symposium on Operating Systems Design and Implementation (OSDI 24)}, pages 965--988, 2024.

\bibitem{drr}
M.~Shreedhar and George Varghese.
\newblock Efficient fair queueing using deficit round-robin.
\newblock {\em {IEEE/ACM} Trans. Netw.}, 4(3):375--385, 1996.

\bibitem{snell2024scaling}
Charlie Snell, Jaehoon Lee, Kelvin Xu, and Aviral Kumar.
\newblock Scaling llm test-time compute optimally can be more effective than scaling model parameters.
\newblock {\em arXiv preprint arXiv:2408.03314}, 2024.

\bibitem{preble}
Vikranth Srivatsa, Zijian He, Reyna Abhyankar, Dongming Li, and Yiying Zhang.
\newblock {Preble: Efficient Distributed Prompt Scheduling for LLM Serving}.
\newblock 2024.

\bibitem{scheduler_overhead}
Vikranth Srivatsa, Dongming Li, Yiying Zhang, and Reyna Abhyankar.
\newblock {Can Scheduling Overhead Dominate LLM Inference Performance? A Study of CPU Scheduling Overhead on Two Popular LLM Inference Systems}.
\newblock \url{https://mlsys.wuklab.io/posts/scheduling_overhead/}.

\bibitem{eevdf}
Ion Stoica and Hussein Abdel-Wahab.
\newblock {Earliest Eligible Virtual Deadline First: A Flexible and Accurate Mechanism for Proportional Share Resource Allocation}.
\newblock {\em Old Dominion Univ., Norfolk, VA, Tech. Rep. TR-95-22}, 1995.

\bibitem{vaswani2017attention}
Ashish Vaswani, Noam Shazeer, Niki Parmar, Jakob Uszkoreit, Llion Jones, Aidan~N Gomez, {\L}ukasz Kaiser, and Illia Polosukhin.
\newblock Attention is all you need.
\newblock {\em Advances in neural information processing systems}, 30, 2017.

\bibitem{wang2023voyager}
Guanzhi Wang, Yuqi Xie, Yunfan Jiang, Ajay Mandlekar, Chaowei Xiao, Yuke Zhu, Linxi Fan, and Anima Anandkumar.
\newblock Voyager: An open-ended embodied agent with large language models.
\newblock {\em arXiv preprint arXiv: Arxiv-2305.16291}, 2023.

\bibitem{maxminfair}
Wikipedia.
\newblock Max-min fairness.
\newblock \url{https://en.wikipedia.org/wiki/Max-min_fairness}.

\bibitem{yang2024qwen2}
An~Yang, Baosong Yang, Binyuan Hui, Bo~Zheng, Bowen Yu, Chang Zhou, Chengpeng Li, Chengyuan Li, Dayiheng Liu, Fei Huang, et~al.
\newblock Qwen2 technical report.
\newblock {\em arXiv preprint arXiv:2407.10671}, 2024.

\bibitem{yao2024tree}
Shunyu Yao, Dian Yu, Jeffrey Zhao, Izhak Shafran, Tom Griffiths, Yuan Cao, and Karthik Narasimhan.
\newblock Tree of thoughts: Deliberate problem solving with large language models.
\newblock {\em Advances in Neural Information Processing Systems}, 36, 2024.

\bibitem{yu2022orca}
Gyeong-In Yu, Joo~Seong Jeong, Geon-Woo Kim, Soojeong Kim, and Byung-Gon Chun.
\newblock Orca: A distributed serving system for transformer-based generative models.
\newblock In {\em 16th USENIX Symposium on Operating Systems Design and Implementation (OSDI 22)}, pages 521--538, 2022.

\bibitem{delayschedule}
Matei Zaharia, Dhruba Borthakur, Joydeep Sen~Sarma, Khaled Elmeleegy, Scott Shenker, and Ion Stoica.
\newblock {Delay Scheduling: a Simple Technique for Achieving Locality and Fairness in Cluster Scheduling}.
\newblock In {\em Proceedings of the 5th European conference on Computer systems}, pages 265--278, 2010.

\bibitem{zhang2024rest}
Dan Zhang, Sining Zhoubian, Ziniu Hu, Yisong Yue, Yuxiao Dong, and Jie Tang.
\newblock Rest-mcts*: Llm self-training via process reward guided tree search.
\newblock {\em arXiv preprint arXiv:2406.03816}, 2024.

\bibitem{blendserve}
Yilong Zhao, Shuo Yang, Kan Zhu, Lianmin Zheng, Baris Kasikci, Yang Zhou, Jiarong Xing, and Ion Stoica.
\newblock {BlendServe: Optimizing Offline Inference for Auto-regressive Large Models with Resource-aware Batching}.
\newblock {\em arXiv preprint arXiv:2411.16102}, 2024.

\bibitem{zheng2023judging}
Lianmin Zheng, Wei-Lin Chiang, Ying Sheng, Siyuan Zhuang, Zhanghao Wu, Yonghao Zhuang, Zi~Lin, Zhuohan Li, Dacheng Li, Eric.~P Xing, Hao Zhang, Joseph~E. Gonzalez, and Ion Stoica.
\newblock Judging llm-as-a-judge with mt-bench and chatbot arena, 2023.

\bibitem{zheng2023efficiently}
Lianmin Zheng, Liangsheng Yin, Zhiqiang Xie, Jeff Huang, Chuyue Sun, Cody~Hao Yu, Shiyi Cao, Christos Kozyrakis, Ion Stoica, Joseph~E Gonzalez, et~al.
\newblock Efficiently programming large language models using sglang.
\newblock {\em arXiv preprint arXiv:2312.07104}, 2023.

\end{thebibliography}

\newpage
\appendix
\section{Appendix}

\subsection{Proof for Local DLPM}\label{sec:appendix_dlpm}

\begin{restatable}[\textbf{Service Bound}]{theorem}{dlpmServiceBound} 
\label{theorem:service-bound}
Consider any execution of the DLPM scheme in which client $i$ is backlogged. After any $K_i$ rounds (where $q_i$ is replenished $K_i$ times) from $t_1$ to $t_2$, the difference between $K_i \cdot Q^u$ (i.e., the service that client $i$ should have sent) and $W_i (t_1, t_2)$ (i.e., the service that client $i$ actually received) is bounded by $\max(Q^u, U)$, where $U=w_e\cdot L_{input} + w_q\cdot M$.
\label{theorem:dlpmServiceBound}
\end{restatable}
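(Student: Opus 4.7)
The plan is to view the deficit counter $q_i$ as a ledger. Every modification of $q_i$ during $[t_1, t_2)$ is either a replenishment of $+Q^u$ (happening exactly $K_i$ times by assumption) on Line~\ref{line:replenish_c}, or a deduction corresponding to service delivered to client $i$ on Lines~\ref{line:subtract_c} and~\ref{line:subtract_c_finish}. Summing all these changes yields the accounting identity
\[
q_i(t_2) - q_i(t_1) \;=\; K_i \cdot Q^u - W_i(t_1, t_2),
\]
so the theorem reduces to bounding $|q_i(t_2) - q_i(t_1)|$ by $\max(Q^u, U)$.

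Next I would establish two pointwise invariants on $q_i$. First, $q_i(t) \leq Q^u$ always: the replenishment update fires only when $q_i \leq 0$ (the guard at Line~\ref{line:replenish_c}), and all other updates to $q_i$ are decrements, so the counter can never exceed $Q^u$ immediately after being topped up. Second, whenever $q_i(t) < 0$ we still have $q_i(t) > -U$: a deduction only follows the guard $q_i > 0$; Line~\ref{line:subtract_c} subtracts at most $w_e L_{input}$ for a single added request, and the subsequent forward step on Line~\ref{line:subtract_c_finish} subtracts at most $w_q M$ since the total batched tokens are capped by $M$. Hence, in any single pass where $q_i$ crosses from positive to non-positive, it undershoots zero by at most $U$ before the next \textsc{CheckRefill} invocation.

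These two invariants confine $q_i$ to $(-U, Q^u]$, which already yields the loose bound $|q_i(t_2) - q_i(t_1)| \leq Q^u + U$. To sharpen this to $\max(Q^u, U)$, I would split on the sign of $q_i(t_2) - q_i(t_1)$ and exploit the fact that between any two consecutive replenishments, the counter cannot swing by more than one quantum worth of unspent credit plus one request worth of overshoot. A positive difference represents net accumulated credit and is capped by $Q^u$; a negative difference is the tail overshoot within a single inter-replenishment round and is capped by $U$. The main obstacle is precisely this last tightening: the weaker bound falls out trivially from the range of $q_i$, but separating the ``unused credit at $t_1$'' and ``overshoot at $t_2$'' contributions so that they do not add, but instead take a maximum, requires an amortized argument across the $K_i$ rounds in the spirit of the classical Deficit Round Robin analysis, rather than a naive endpoint estimate.
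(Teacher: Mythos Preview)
Your ledger identity and the two invariants $q_i(t)\le Q^u$ and $q_i(t)>-U$ are exactly what the paper proves, by the same reasoning you give. The divergence is in the final accounting step. The paper does not carry $q_i(t_1)$: it writes $W_i(t_1,t_2)=K_i\cdot Q^u - q_i(t_2)$ outright, which amounts to reading ``from $t_1$'' as ``from the moment client $i$ joins'' (Line~\ref{line:client_join} sets $q_i\leftarrow 0$). With $q_i(t_1)=0$, the theorem is immediate from the range $q_i\in(-U,Q^u]$, since then $|K_iQ^u-W_i|=|q_i(t_2)|\le\max(Q^u,U)$. There is no amortized argument across rounds; the paper simply bounds a single endpoint value.

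Your attempted tightening for an \emph{arbitrary} $t_1$ cannot succeed, and the obstacle you flag is real and fatal in that generality. Take $q_i(t_1)$ just above $-U$ (a large overshoot right before a refill) and $q_i(t_2)=Q^u$ (just after a refill): the difference is arbitrarily close to $Q^u+U$, so neither ``positive difference $\le Q^u$'' nor ``negative difference $\le U$'' holds. The $\max(Q^u,U)$ bound is simply false for general $t_1,t_2$; it needs $q_i(t_1)=0$. Incidentally, the only downstream consumer of this theorem in the paper (the proof of \cref{theorem:bounded-service-difference}) uses the loose bound $|q_i(t_1)-q_i(t_2)|\le Q^u+U$ anyway, which your range argument already delivers without any further work.
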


\begin{proof}
Let $q_i(t)$ denote the deficit counter value of client $i$ at time $t$. Since the deficit counter will only be refilled when $q_i \leq 0$ (line \ref{line:replenish_c}) by $Q^u$, we have 
\begin{equation}
q_i(t) \leq Q^u \label{eq:upper_bound_dplm}
\end{equation}

Now we prove through induction:
\begin{equation}
q_i(t) > -U \label{eq:lower_bound_dplm}
\end{equation}
\begin{itemize}
    \item At the beginning, all $q_i(0) = 0$. \Cref{eq:lower_bound_dplm} holds.
    \item We then prove if at time $t$, \Cref{eq:lower_bound_dplm} holds, then for $t' > t$, \Cref{eq:lower_bound_dplm} also holds. 
    \item At line \ref{line:replenish_c}, $q_i(t')=q_i + Q^u > q_i > -U$. \Cref{eq:lower_bound_dplm} holds.
    \item Since line \ref{line:subtract_c} will be reached only when $q_i > 0$, $q_i(t') = q_i - w_e\cdot extend\_length(r) > - w_e \cdot L_{input}$. \Cref{eq:lower_bound_dplm} holds.
    \item At line \ref{line:subtract_c_finish}, since $q_i(t') = q_i - w_q \cdot \vert \{r| client(r)=i, r \in B\} \vert$ will be repeated for $n$ steps until some requests are finished. Therefore, we have  $q_i(t') \geq q_i - n \cdot w_q \cdot \vert \{r| client(r)=i, r \in B\}\vert$. Since the number of decoded tokens cannot exceed the server's maximum token capacity $M$, $n \cdot \vert \{r| client(r)=i, r \in B\}\vert \leq M$. We then have $q_i(t') = q_i - w_q \cdot M > - U$. \Cref{eq:lower_bound_dplm} holds.
\end{itemize}

Therefore, we have $W_i(t_1, t_2)=K_i \cdot Q^u - q_i(t_2)$. Combining \Cref{eq:lower_bound_dplm} and \Cref{eq:upper_bound_dplm}, we have:
\begin{equation}
    \vert W_i(t_1, t_2) - K_i \cdot Q^u \vert = \vert q_i(t_2) \vert \leq \max(Q^u, U)
\end{equation}
\end{proof}

\begin{restatable}[\textbf{Latency Bound}]{theorem}{latencyBound} Let $A(r)$ and $D(r)$ denote the arrival time and dispatch time of a request $r$. Assume there are in total $n$ clients, $\forall t_1, t_2$, if at $t_1$, a client $f$ is not backlogged and has no requests in the running batch, then the next request $r_f$ with $t_1  < A(r_f) < t_2$ will have its response time bounded: $D(r_f) - A(r_f) \leq 2 \cdot (n-1) \cdot \frac{Q^u+U}{a}$, where $a$ is the lower bound of the server capacity. 
\label{theorem:latency-bound}
\end{restatable}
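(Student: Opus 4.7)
The plan is to bound the waiting time of $r_f$ by combining the service bounds already established in \cref{theorem:bounded-service-difference} and \cref{theorem:backlog} with the work-conservation property of DLPM. Set $t_0 = A(r_f)$ and $D = D(r_f)$. My first observation is that throughout $[t_0, D)$ the client $f$ is continuously backlogged, because $r_f$ sits in the queue the entire time, yet $f$ receives no service at all in that interval, i.e. $W_f(t_0, D) = 0$. Indeed, by hypothesis $f$ has no pending or running request at $t_1$ and $r_f$ is defined to be the next arrival, and $r_f$ itself only joins the running batch at time $D$.

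Next, for each of the other $n-1$ clients $g$, I will show $W_g(t_0, D) \leq 2(U + Q^u)$ by a case split. If $g$ is continuously backlogged during $[t_0, D)$, then \cref{theorem:bounded-service-difference} applied to the pair $(f, g)$ together with $W_f(t_0, D) = 0$ immediately gives the bound. If $g$ is not continuously backlogged on this interval, I instead apply \cref{theorem:backlog} with our continuously-backlogged $f$ playing the role of the backlogged client, obtaining $W_g(t_0, D) - W_f(t_0, D) \leq 2(U + Q^u)$. Summing these per-client bounds and using $W_f(t_0, D) = 0$ yields the aggregate upper bound
\[
\sum_{i} W_i(t_0, D) \;\leq\; 2(n-1)(U + Q^u).
\]

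To turn this into a latency bound I would invoke the work-conservation property of DLPM noted right after \cref{theorem:backlog}: because $f$ is continuously backlogged, the running batch is never empty on $[t_0, D)$, so the server executes forward steps at an aggregate cost-accumulation rate at least $a$ throughout. This gives $\sum_i W_i(t_0, D) \geq a \cdot (D - t_0)$. Chaining this with the upper bound above produces
\[
D - t_0 \;\leq\; \frac{2(n-1)(U + Q^u)}{a},
\]
which is precisely the claimed latency bound.

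The step I expect to be most delicate is the last one: upgrading the qualitative statement ``DLPM is work-conserving'' into the quantitative throughput inequality $\sum_i W_i(t_0, D) \geq a(D - t_0)$ in the units of the cost function $W = w_e n_e + w_q n_q$. I would interpret the hypothesized $a$ as the minimum cost-accumulation rate achievable by the server over any admissible non-empty batch composition, and then argue that since $f$ stays backlogged, the scheduler always has some request to include in the batch, so every model step contributes cost at rate at least $a$ to $\sum_i W_i$. Handling the small boundary effect that a step straddling $D$ may not have fully completed is routine, since the resulting slack is absorbed into the $U$ term already present on the right-hand side.
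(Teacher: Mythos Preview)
Your proposal is correct and follows essentially the same route as the paper: bound each other client's service by $2(U+Q^u)$ using the two preceding service-bound theorems, sum over the $n-1$ other clients, and convert total service to elapsed time via the server-capacity lower bound $a$. Your write-up is in fact more careful than the paper's sketch---you make the backlogged/non-backlogged case split on $g$ explicit and you flag the step where work-conservation is quantified as a rate inequality, whereas the paper simply asserts the token-to-time conversion; the only ingredient the paper mentions that you leave implicit is that $q_f$ is reset to $0$ at Line~\ref{line:client_join} when $f$ rejoins, which underpins the applicability of the service bounds from the moment $r_f$ arrives.
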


\begin{proof}
    \begin{itemize}
        \item Since there is no running batch of $f$ in the system, $r_f$ will be selected for the next request for $f$. 
        \item Earlier, we have shown that the service bound for backlogged clients compared to either backlogged or non-backlogged clients is $2 \cdot (Q^u + U)$. 
        \item From $t_1$ to $D(r_f)$, $W_f (t_1, D(r_f) )$ will be within $2 \cdot (Q^u + U)$ of service received by other clients. 
        \item Since at Line~\ref{line:client_join}, $q_f$ is set to 0 when $f$ rejoins, the maximum number of tokens served before $f$ is served again is: $2 \cdot (n-1) \cdot (Q^u + U)$, where $n - 1$ is the $n - 1$ other clients.  
        \item Given that $a$ is the lower bound of the server capacity, the dispatch time for $f$ is therefore bounded: $D(r_f) - A(r_f) \leq 2 \cdot (n-1) \cdot \frac{Q^u + U}{a}$.
    \end{itemize}
\end{proof}

\subsection{Proof for \distname Scheduling}
\label{section:doubleq}
\begin{restatable}[\textbf{Service Bound}]{theorem}{doubleqServiceBound}
Consider any execution of the \distname Scheduling scheme in which client $i$ is backlogged. The difference between $\sum_{w \in W} k_{i,w} \cdot Q^u$ (i.e., the service that client $i$ should have sent) and $W_i$ (i.e., the service that client $i$ actually received) is bounded by $\max(Q^u, U) \times \vert W \vert$, where $U=w_e\cdot L_{input} + w_q\cdot M$. Let $k_{i,w}$ is the number of times client $i$ has been replenished at worker $w$. 
\end{restatable}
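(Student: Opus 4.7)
The plan is to lift the single-worker DLPM Service Bound (\cref{theorem:dlpmServiceBound}) to the distributed setting by decomposing client $i$'s total service across workers and then summing the per-worker bounds via the triangle inequality. First I would observe the additive decomposition
\begin{equation*}
W_i = \sum_{w \in W} W_{i,w},
\end{equation*}
where $W_{i,w}$ denotes the service rendered to client $i$ by worker $w$. Because each local worker runs DLPM independently with its own local deficit counter and local refill quantum $Q^u$, the single-worker analysis applies verbatim at worker $w$: the inductive invariant $-U < q_{i,w}(t) \leq Q^u$ on the local counter (established in the proof of \cref{theorem:dlpmServiceBound} from the three possible update rules) gives, after $k_{i,w}$ local refills,
\begin{equation*}
\left\vert W_{i,w} - k_{i,w}\cdot Q^u \right\vert \leq \max(Q^u, U).
\end{equation*}

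Next I would combine the per-worker bounds:
\begin{equation*}
\left\vert W_i - \sum_{w \in W} k_{i,w}\cdot Q^u \right\vert \leq \sum_{w \in W}\left\vert W_{i,w} - k_{i,w}\cdot Q^u \right\vert \leq |W|\cdot\max(Q^u, U),
\end{equation*}
which is exactly the claimed bound. Workers $w$ that client $i$ never reaches contribute $W_{i,w}=0$ and $k_{i,w}=0$, so they add nothing to either side; the factor of $|W|$ is thus tight only when the backlogged client is effectively spread across the entire cluster, as one would expect.

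The main subtlety I expect to have to address is isolating the two counter systems that coexist in \distname: the global scheduler's per-worker counters $q_{i,w}$ (which gate dispatch inside \textsc{SelectWorker} and are refilled by $Q^w$) versus the local DLPM counters (refilled by $Q^u$) that actually govern service at each worker. The service bound depends only on the latter, which is consistent with the appearance of $Q^u$, not $Q^w$, in the theorem statement. I would argue that the global scheduler's asynchronous routing and its own replenishment events cannot violate the local invariant, because they modify only the global bookkeeping and the contents of the local queue, never the in-flight local $q_i$ that the DLPM induction reasons about. Once that decoupling is made explicit, \cref{theorem:dlpmServiceBound} is invoked once per worker and the distributed bound drops out by a single application of the triangle inequality, with the global backlog hypothesis used only to ensure that the $k_{i,w}$'s are well-defined over the relevant horizon.
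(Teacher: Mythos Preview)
Your proposal is correct and follows essentially the same route as the paper: the paper also bounds each local DLPM deficit counter by $-U < q_{i,w}^u(t) \leq Q^u$, writes $W_i = \sum_{w \in W}(k_{i,w}\cdot Q^u - q_{i,w}^u)$, and then sums the per-worker bounds to obtain the $|W|\cdot\max(Q^u,U)$ factor. The only cosmetic difference is that the paper re-runs the induction from \cref{theorem:dlpmServiceBound} inline at each worker rather than citing it, and your explicit remark separating the global $Q^w$ counters from the local $Q^u$ counters is a clarification the paper leaves implicit.
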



\begin{proof}
Let $k_{i,w}$ be the number of times the client $i$ has replenished quantum locally at worker $w$. We want to show for a client $i$: 
$$
\sum_{w \in W} k_{i,w} \cdot Q^u - \sum_{w \in W} (k_{i,w} \cdot Q^u - q_{i,w}^u(t)) \leq \max (Q^u, U) \times \vert W \vert
$$

Let $q_{i,w}^u(t)$ denote the deficit counter value for worker $w$ of client $i$ at time $t$. Since the deficit counter will only be refilled when $q_{i,w}^u(t) \leq 0$ (line \ref{line:replenish_c}) by $Q^u$, we have 
\begin{equation}
q_{i,w}^u(t) \leq Q^u \label{eq:upper_bound_doubleq}
\end{equation}

Now we prove through induction:
\begin{equation}
q_{i,w}^u(t) > -U \label{eq:lower_bound_doubleq}
\end{equation}
\begin{itemize}
    \item At the beginning, all $q_{i,w}^u(t)=0$. \Cref{eq:lower_bound_doubleq} holds.
    \item We then prove if at time $t$, \Cref{eq:lower_bound_doubleq} holds, then for $t' > t$, \Cref{eq:lower_bound_doubleq} also holds. 
    \item At line \ref{line:replenish_c}, $q_{i,w}^u(t')=q_{i,w}^u(t) + Q^u > q_{i,w}^u(t) > -U$. \Cref{eq:lower_bound_doubleq} holds.
    \item Since line \ref{line:subtract_c} will be reached only when $q_{i,w}^u(t) > 0$, $q_{i,w}^u(t') = q_{i,w}^u(t) - w_e \cdot L_{input} > - w_e \cdot L_{input}$. \Cref{eq:lower_bound_doubleq} holds.
    \item At line \ref{line:subtract_Q}, since $q_{i,w}^u(t') = q_{i,w}^u(t) - w_q \cdot \vert \{r| client(r)=i, r \in B\} \vert$ will be repeated for $n$ steps until some requests are finished. Therefore, we have  $q_{i,w}^u(t') = q_{i,w}^u(t) - n \cdot w_q \cdot \vert \{r| client(r)=i, r \in B\}\vert$. Since the number of decoded tokens cannot exceed the server's maximum token capacity $M$, $n \cdot \vert \{r| client(r)=i, r \in B\}\vert \leq M$. We then have $q_{i
    ,w}^u = q_{i,w}^u(t) - w_q \cdot M > - w_e \cdot L_{input} - w_q \cdot M$. \Cref{eq:lower_bound_doubleq} holds.
\end{itemize}

We have $W_i(t_1, t_2) =\sum_{w \in W} (k_{i,w} \cdot Q^u - q_{i,w}^u(t_2))$. Combining \Cref{eq:lower_bound_doubleq} and \Cref{eq:upper_bound_doubleq}, we have:
\begin{equation}
    \vert \sum_{w \in W} k_{i,w} \cdot Q^u - W_i(t_1, t_2)  \vert \leq \max (Q^u, U) \times \vert W \vert
\end{equation}
\end{proof}

\doubleQBoundedServiceDifference*

\begin{proof}
    \begin{itemize}
        \item From Theorem~\ref{theorem:bounded-service-difference}, the service bound for each worker is: $2 \cdot (U + Q^u)$.
        \item Since if a client is backlogged, it will have a request and hence be backlogged in all workers. This is because from Line~\ref{line:select_worker}, requests will be distributed to all workers and credit for each worker is exhausted, before replenishing the credits for all workers. 
        \item Therefore, the service bound for \distname is $2 \cdot \vert W \vert (U + Q^u)$.
    \end{itemize}

\end{proof}

\doubleQBacklog*

\begin{proof}

\begin{itemize}

    \item $f$ is continuously backlogged in all workers.
    \item $g$ is not continuously backlogged in at least one worker.
    \item From Lemma~\ref{theorem:backlog}, the service bound is $\vert W \vert \cdot (2 U + 2 Q^u)$ between backlogged and either backlogged or non-backlogged clients. 
    
\end{itemize}

\end{proof}

\begin{restatable}
[\textbf{Latency Bound}]{theorem}{doubleQLatencyBound} Let $A(r)$ and $D(r)$ denote the arrival time and dispatch time of a request $r$. Assume there are in total $n$ clients, $\forall t_1, t_2$, if at $t_1$, a client $f$ is not backlogged and has no requests in the running batch, then the next request $r_f$ with $t_1  < A(r_f) < t_2$ will have its response time bounded: $D(r_f) - A(r_f) \leq (n-1) \vert W \vert \cdot \frac{2U + 2Q^u}{a}$, where $a$ is the lower bound of the server capacity.
\end{restatable}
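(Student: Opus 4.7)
The plan is to mirror the structure of the single-worker latency bound (Theorem~\ref{theorem:latency-bound}), but invoke the stronger, worker-count-scaled service bounds established for \distname in Theorems~\ref{theorem:bounded-service-difference} and \ref{theorem:backlog} (as lifted to the distributed setting in \doubleQBoundedServiceDifference\ and \doubleQBacklog). The key observation is that when $f$ has no backlog and no outstanding request in any running batch at time $t_1$, the dispatch of its next request $r_f$ is governed solely by how much service the other $n-1$ clients can grab before $f$'s per-worker deficit counters bring $f$ back into contention.

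First, I would fix notation: let $r_f$ arrive at time $A(r_f)\in(t_1,t_2)$, and let $D(r_f)$ be the first time the global scheduler dispatches $r_f$ to some worker $w$. Between $A(r_f)$ and $D(r_f)$, $f$ is backlogged (it has the request $r_f$ waiting), while every other client $g$ is either backlogged or non-backlogged. Applying \doubleQBoundedServiceDifference\ on one hand and \doubleQBacklog\ on the other, the service any other client $g$ can receive in the interval $[A(r_f), D(r_f))$ exceeds $W_f(A(r_f), D(r_f))$ by at most $2\vert W\vert (U+Q^u)$. Since $f$ itself receives no service until $D(r_f)$ (its request has not yet been dispatched), $W_f(A(r_f),D(r_f))=0$, so each other client's service in this interval is bounded by $2\vert W\vert(U+Q^u)$.

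Second, I would sum over the $n-1$ other clients to bound the total service delivered systemwide during $[A(r_f),D(r_f))$ by $(n-1)\cdot 2\vert W\vert(U+Q^u)$. Invoking the lower bound $a$ on aggregate server capacity converts this service bound into a time bound: the interval length satisfies $D(r_f)-A(r_f)\leq (n-1)\vert W\vert \cdot \dfrac{2U+2Q^u}{a}$, which is the claim. A small subtlety I would verify is that when $f$ (re)joins at a worker, its per-worker counter $q_{i,w}$ is effectively reset in the same spirit as Line~\ref{line:client_join} of \cref{alg:local-dlpm}, so $f$ starts the contention with no accumulated credit and the service-bound argument is tight from $A(r_f)$ onward rather than from some earlier reference time.

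The main obstacle I anticipate is handling the asymmetry between the global scheduler's deficit counters $q_{i,w}$ and the local worker's DLPM counters: a ``dispatch'' at the global level happens when \textsc{SelectWorker} picks $w$ for $r_f$, but the request may still queue locally before actually being served. The theorem is stated in terms of dispatch time $D(r_f)$, so the global-level service-bound argument is what matters and the local queueing does not enter the bound — but I would state this explicitly to avoid conflating local service with global dispatch. I would also need to be careful that the service-bound theorems apply over arbitrary intervals (not just aligned with quantum replenishment epochs), which follows because the proofs of \doubleQBoundedServiceDifference\ and \doubleQBacklog\ already quantify over all $[t_1,t_2)$.
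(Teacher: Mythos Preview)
Your proposal is correct and follows essentially the same approach as the paper: invoke the \distname service bounds (both backlogged--backlogged and backlogged--non-backlogged, each giving $2\vert W\vert(U+Q^u)$), observe that $W_f=0$ on $[A(r_f),D(r_f))$, sum over the other $n-1$ clients, and divide by the server rate $a$. The paper's proof is in fact terser than yours---it simply states the service bound, multiplies by $n-1$, and divides by $a$---so your additional remarks on the counter reset and the global-dispatch versus local-queueing distinction are clarifications the paper omits rather than departures from its argument.
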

\label{sec:Doubleq bound}


\begin{proof}
    \begin{itemize}
        \item Since there is no running batch of $f$ in the system, $r_f$ will be selected for the next request for $f$. 
        \item Earlier, we have shown that the bound between a backlogged client and a non-backlogged client in \distname to be $\max_{i} W_i - \min_{i} W_i \leq (2U + 2Q^u) \vert W \vert$. 
        \item Therefore the maximum number of tokens served before $f$ is served again is: $(n-1) \cdot (2U + 2Q^u) \vert W \vert$, where $n - 1$ is the $n - 1$ other client.  
        \item Given that $a$ is the lower bound of the server capacity, the dispatch time for $f$ is therefore bounded: $D(r_f) - A(r_f) \leq (n-1) \vert W \vert \cdot \frac{2U + 2Q^u}{a}$.
    \end{itemize}
\end{proof}

\begin{restatable}[\textbf{Infinite $Q_w$ is not fair}]{theorem}{infiniteQW}
Consider any execution of the \distname Scheduling scheme in which client $Q_w$ is infinite. Such scheduling scheme is not fair. 
\label{theorem:infiniteQ}
\end{restatable}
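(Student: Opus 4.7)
The plan is to prove Theorem~\ref{theorem:infiniteQ} by exhibiting an explicit counterexample to Property~1 (bounded service difference between continuously backlogged clients) rather than arguing abstractly. The crux is that the quantum $Q^w$ at the global scheduler is the \emph{only} mechanism that can pull a client away from its locality-preferred worker; removing that force turns routing into pure longest-prefix-match on the global RadixTree, which admits permanent load concentration.

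First, I would verify the degenerate behavior of \textsc{SelectWorker}. When $Q^w = \infty$, the very first refill in the \textbf{while} loop sets $q_{i,w}$ to infinity for every $w$, and every subsequent finite subtraction (bounded by $w_e \cdot L_{input}$ or $w_q \cdot L_{output}$) leaves $q_{i,w}$ positive. Hence $G_{avail} = W$ for all subsequent decisions, so $G_{cand} = G \cap G_{avail} = G$. If $G \ne \emptyset$ (i.e.\ some worker has the longest prefix match), the request goes to a worker in $G$; the fallback branch on Line~\ref{line:routeEven} is used only when no worker has any matching prefix at all. Thus, once a client's prefix has been inserted into the RadixTree with worker $w$, every subsequent request of that client is pinned to $w$.

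Next, I would construct the counterexample. Take $|W| = 2$ workers $w_1, w_2$ of equal capacity and three continuously backlogged clients $A, B, C$, where $A$ and $B$ issue requests sharing a common nontrivial prefix $P$, while $C$ issues requests with a disjoint prefix $Q$. Arrange the initial arrivals so that the first request of $A$ is routed to $w_1$ (breaking the empty-$G$ tie by queue size) and the first request of $C$ is routed to $w_2$. By the pinning observation above, every subsequent request of $A$ and $B$ is routed to $w_1$ (since $G = \{w_1\}$) and every subsequent request of $C$ is routed to $w_2$. The local DLPM scheduler on $w_1$ splits its throughput $T$ approximately evenly between the two backlogged clients $A$ and $B$ (by Theorem~\ref{theorem:bounded-service-difference} applied locally), while the local DLPM scheduler on $w_2$ serves $C$ alone at rate $T$. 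Over any interval $[t_1, t_2)$ of length $\Delta = t_2 - t_1$, this yields $W_A(t_1,t_2) \approx T\Delta/2$ and $W_C(t_1,t_2) \approx T\Delta$, so $|W_C(t_1,t_2) - W_A(t_1,t_2)| = \Theta(\Delta)$.

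Since $A$ and $C$ are both continuously backlogged but the gap between their received services grows linearly in $\Delta$, no constant $\delta$ independent of $t_2 - t_1$ can satisfy the first fairness property from Section~\ref{sec:preliminary}, so \distname with $Q^w = \infty$ fails to achieve approximate max-min fairness. The main obstacle I anticipate is being careful about the interaction between the global pinning and the local DLPM dynamics: I must justify that the clients remain continuously backlogged (which follows from choosing arrival rates exceeding each worker's per-client share) and that the local DLPM guarantee of Theorem~\ref{theorem:bounded-service-difference} genuinely applies to the per-worker service split, so the $\Theta(\Delta)$ gap is a tight lower bound rather than an artifact of transient behavior. A secondary subtlety worth addressing is eviction: if $P$ were evicted from $w_1$'s RadixTree, routing could shift, but by choosing $P$ short enough relative to $M$ and keeping $A,B$ continuously backlogged, $P$'s nodes remain referenced and cannot be evicted, preserving the pinning indefinitely.
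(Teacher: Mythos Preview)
Your proposal is correct and follows the same high-level strategy as the paper---prove unfairness by constructing a concrete counterexample to Property~1---but the two counterexamples differ. The paper uses two clients: client~$f$ with a shared long prefix gets pinned to a single worker, while client~$g$ with \emph{zero} prefix matching is load-balanced across all workers via Line~\ref{line:routeEven}, so $g$ accumulates unboundedly more service than $f$. You instead use three clients and two workers: $A$ and $B$ collide on $w_1$ (each getting roughly half its capacity by the local DLPM guarantee), while $C$ has $w_2$ to itself, yielding a $\Theta(\Delta)$ gap between $C$ and $A$. Your construction is more carefully argued---you justify the degeneration of \textsc{SelectWorker}, verify that pinning persists (no eviction), and check that the local Theorem~\ref{theorem:bounded-service-difference} applies---whereas the paper's proof is a terse sketch that leaves open, for instance, why $g$'s own inserted prefixes do not eventually cause $g$ to self-pin. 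Your version avoids that issue by having every client's requests match a stable prefix after the first round, so the pinning is unambiguous.
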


\begin{proof}
    \begin{itemize}
        \item The requests will not be perfectly load balanced to all workers.
        \item Proof by counterexample: client $f$ sends requests with large prefix matching. Requests from that client will be sent to the same worker hosting the prefix. 
        \item Another client $g$ sends requests with zero prefix matching, the requests will be load-balanced to all workers because of Line~\ref{line:routeEven}.
        \item Client $g$ will be able to get unboundedly more service compared to $f$ as it is replenished more, due to being scheduled to more workers, despite both being backlogged. 
    
    \end{itemize}
\end{proof}

\subsection{Per-Client Round-Robin Can Achieve Fairness. }
\label{sec:DRR client}
\begin{theorem}[\textbf{Service between backlogged or non-backlogged clients is unbounded}] 
\label{per-client-round-robin}
In any interval $[t_1, t_2)$, The difference between the maximum service among all backlogged clients and the minimum service among all backlogged or non-backlogged clients is bounded by a constant independent of the time interval $t_2 - t_1$. 
\end{theorem}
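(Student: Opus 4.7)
The plan is to reduce the global fairness bound to the per-worker VTC bound, exploiting the fact that per-client round-robin distributes each client's requests across all workers evenly. First I would invoke the original VTC service bound at a single worker: for any two clients that are continuously backlogged at worker $w$ during $[t_1,t_2)$, the difference of their service received at $w$ is at most some constant $\delta_{\text{VTC}}$ independent of $t_2-t_1$. This is exactly the property established in the VTC paper, so it can be cited as a black box.

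Next I would argue that the per-client round-robin dispatch almost preserves the backlog condition at every worker. Concretely, if client $f$ is continuously backlogged globally on the interval, then at each worker $w$ the number of pending requests of $f$ differs from the number at any other worker by at most one, because the global scheduler strictly rotates $f$'s requests through the worker list. Hence $f$ is backlogged at every worker except possibly during a short transient window whose length is bounded by the time to dispatch one request per worker, i.e., a constant $c$ independent of $t_2-t_1$.

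I would then combine the two ingredients. For two continuously globally-backlogged clients $f,g$, apply the per-worker VTC bound at each worker to get $|W_f^{(w)}(t_1,t_2) - W_g^{(w)}(t_1,t_2)| \leq \delta_{\text{VTC}} + c$, summing over workers to get a global bound of $|W|\,(\delta_{\text{VTC}} + c)$, which is independent of the interval length. For the asymmetric case where $g$ is not continuously backlogged, the service $W_g(t_1,t_2)$ is capped by the total arrivals of $g$ during the interval, together with at most $U$ of residual in-flight work from requests submitted before $t_1$; since round-robin replicates $g$'s eligible arrivals across workers, the same summation yields a constant bound.

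The main obstacle is treating the transient "not-quite-backlogged at $w$" phase cleanly. VTC's analysis is stated for \emph{continuously} backlogged clients, so I need to either (i) show that the transient gap introduces only a $O(1)$ correction to the per-worker counter difference, and then absorb it into a slightly larger constant, or (ii) re-derive a VTC-like bound that tolerates brief absences. The cleanest route is (i): bound the virtual-counter drift during a transient absence by the service of at most one request, charge this to a constant per worker, and sum. Care is also needed to keep the round-robin pointer's state consistent across the interval boundaries, which again contributes only a constant offset rather than a term growing with $t_2-t_1$.
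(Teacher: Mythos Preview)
Your proposal is correct and follows essentially the same approach as the paper: argue that per-client round-robin distributes a backlogged client's requests across all workers so that it is backlogged at each worker, then apply the single-worker fairness bound and sum over the $|W|$ workers. The paper's proof is a three-line sketch that simply asserts ``when a client is backlogged, it is backlogged on all workers'' and then invokes the per-worker bound times $|W|$; it does not spell out the transient ``not-quite-backlogged at $w$'' issue you flag, so your treatment is in fact more careful than the paper's own argument.
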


\begin{proof}
\begin{itemize}
    \item The client requests will be load-balanced to all workers. 
    \item Therefore, when a client is backlogged, it is backlogged on all workers. 
    \item We can apply the bound derived in \distname, multiplied by the number of workers $\vert W \vert$, similar to \S\ref{section:doubleq}. 
\end{itemize}
\end{proof}

\end{document}